\theoremstyle{plain}
\newtheorem{theorem}{Theorem}[section]
\newtheorem{proposition}[theorem]{Proposition}
\theoremstyle{definition}
\newtheorem{problem}[theorem]{Problem}
\theoremstyle{remark}
\newtheorem{remark}{Remark}
\renewcommand{\S}{\mathcal{S}}
\newcommand{\G}{\mathbb{G}}
\newcommand{\R}{\mathbb{R}}
\newcommand{\Q}{\mathcal{Q}}
\newcommand{\Z}{\mathbb{Z}}
\newcommand{\K}{\mathcal{K}}
\newcommand{\Kp}{\mathcal{K}_{\mathsf{p}}}
\newcommand{\Ks}{\mathcal{K}_{\mathsf{s}}}
\newcommand{\E}{\mathcal{E}}
\newcommand{\M}{\mathcal{M}}
\newcommand{\C}{\mathcal{C}}
\newcommand{\Gen}{\mathsf{Gen}}
\newcommand{\Enc}{\mathsf{Enc}}
\newcommand{\Dec}{\mathsf{Dec}}
\newcommand{\pk}{\mathsf{pk}}
\newcommand{\sk}{\mathsf{sk}}
\newcommand{\Ecd}{\mathsf{Ecd}}
\newcommand{\Dcd}{\mathsf{Dcd}}
\newcommand{\argmin}{\mathop{\mathrm{arg~min}}\limits}
\newcommand{\Add}{\mathsf{Add}}
\newcommand{\Mult}{\mathsf{Mult}}
\newcommand{\Res}{\mathsf{Res}}
\newcommand{\evk}{\mathsf{evk}}
\begin{document}

\articletype{Research Article}

\title{Confidential FRIT via Homomorphic Encryption}

\author{
\name{Haruki Hoshino$^\textsf{a}$, Jungjin Park$^\textsf{b}$, Osamu Kaneko$^\textsf{a}$, and Kiminao Kogiso$^\textsf{a}$\thanks{CONTACT Kiminao Kogiso. Email: kogiso@uec.ac.jp}}
\affil{$^\textsf{a}$Department of Mechanical and Intelligent Systems Engineering, Graduate School of Informatics and Engineering, The University of Electro-Communications\\ 1-5-1 Chofugaoka, Chofu, Tokyo 1828585, Japan}
\affil{$^\textsf{b}$Cluster II (Emerging Multi-Interdisciplinary Engineering), School of Informatics and Engineering, The University of Electro-Communications\\ 1-5-1 Chofugaoka, Chofu, Tokyo 1828585, Japan}
}

\maketitle

\begin{abstract}
Edge computing alleviates the computation burden of data-driven control in cyber-physical systems (CPSs) by offloading complex processing to edge servers.
However, the increasing sophistication of cyberattacks underscores the
need for security measures that go beyond conventional IT protections and address the unique vulnerabilities of CPSs.
This study proposes a confidential data-driven gain-tuning framework using homomorphic encryption, such as ElGamal and CKKS encryption schemes, to enhance cybersecurity in gain-tuning processes outsourced to external servers.
The idea for realizing confidential FRIT is to replace the matrix inversion operation with a vector summation form, allowing homomorphic operations to be applied.
Numerical examples under 128-bit security confirm performance comparable to conventional methods while providing guidelines for selecting suitable encryption schemes for secure CPS.
\end{abstract}

\begin{keywords}
  Homomorphic encryption; state-feedback law; fictitious reference iterative tuning
\end{keywords}

\section{Introduction}
A Cyber-Physical System (CPS) plays a central role in realizing Industry 4.0. 
A CPS integrates physical systems with information technology, enabling the development of next-generation power grids \cite{li2012multicast}, multi-robot systems \cite{fink2011robust}, and advanced traffic management systems \cite{guo2018simultaneous}. 
A key driver of CPS development is the advancement of cloud and edge computing~\cite{shi2016edge}. 
Edge computing alleviates the computational burden on IoT (Internet of Things) devices by offloading complex processing to edge servers, enabling both immediacy and efficiency. 
Additionally, the rise of Control as a Service (CaaS)~\cite{esen2015control}, a cloud-based control system model, allows for remote updates and flexible customization of control functions, further accelerating CPS adoption. 
With this trend, data-driven control \cite{hjalmarsson1998iterative,soma2004new,campi2002virtual,coulson2019data,berberich2020data} has gained attention as a promising approach for designing and updating controllers using operational data. 
Data-driven gain tuning methods, such as Fictitious Reference Iterative Tuning (FRIT)~\cite{soma2004new,Kaneko13,Kan15,Kan16,Kan24} and Virtual Reference Feedback Tuning (VRFT)~\cite{campi2002virtual}, exemplify this approach, enhancing the adaptability and efficiency of control system optimization.
These methods are often deployed at the edge, where low-latency processing and reduced communication load are important.
However, there are also compelling advantages to performing such computations in the cloud.
The CPS typically consolidates measurement and log data in cloud-based data lakes for analytics and monitoring, making the input--output trajectories required for data-driven gain tuning readily accessible.
This architecture positions cloud-side execution as a practical and scalable solution for data-driven control in large-scale CPSs.

Despite its benefits, the rapid proliferation of CPSs raises critical security and confidentiality concerns. 
As communication networks expand, control systems face growing threats not only from information leakage and tampering but also from direct attacks on physical infrastructure~\cite{sandberg2015cyberphysical}. 
Traditional security measures, such as encrypting communication channels \cite{amin2009safe,pang2011secure} and implementing redundancy-based protections, have proven insufficient against CPS-specific threats, particularly attacks that manipulate control system behavior. 
Among these, poisoning attacks, which introduce malicious noise into operational data to degrade system performance, pose a severe risk \cite{barreno2006can,shafahi2018poison,fan2022survey,ikezaki2023poisoning}. 
In data-driven control, where controllers are adapted based on input-output data, attackers can exploit this dependency to deliberately cause deviation from the intended performance~\cite{baggio2019data,berberich2020data}. 
The increasing sophistication of cyberattacks underscores the need for security measures that go beyond conventional IT protections and address the unique vulnerabilities of CPSs.

To address these threats, homomorphic encryption has emerged as a promising security solution. 
Partially homomorphic encryption such as Paillier~\cite{paillier1999public} and ElGamal~\cite{elgamal1985public}, along with fully homomorphic encryption such as CKKS \cite{cheon2017homomorphic,Cheon18}, BFV \cite{brakerski2012fully,fan2012somewhat}, and BGV \cite{yagisawa2015fully}, enable secure computation on encrypted data. 
This capability allows control algorithms to be executed in cloud environments while preserving data confidentiality, making homomorphic encryption particularly attractive for CPS applications. 
Integrating homomorphic encryption into control engineering has led to studies on CaaS applications, including linear and polynomial control~\cite{kogiso2015cyber}, model predictive control~\cite{alexandru2020towards}, system identification~\cite{Ada24}, and gain tuning~\cite{schluter2022encrypted,Nag24}.
ln particular, system identification~\cite{Ada24} and gain tuning~\cite{schluter2022encrypted,Nag24} entail considerable computational overhead, especially for operations such as matrix inversion over encrypted data.
The gain tuning methods rely on the CKKS encryption scheme, making it difficult to estimate the required number of encrypted operations in advance.
As a result, expensive bootstrapping may become necessary to manage accumulated noise.
Given the limited computational resources of typical edge devices, offloading such computations to the cloud is often a more practical and scalable solution for secure control.
Furthermore, comparative evaluations of encryption schemes in the context of control system design remain scarce, leaving open questions about the better choice of encryption methods for different control scenarios~\cite{schluter2022encrypted}.

To further clarify the motivation for gain tuning over encrypted data, it is helpful to consider representative scenarios in which preserving data confidentiality is essential.
For example, in industrial IoT environments, controller design may require data from multiple factories or production lines, where confidentiality constraints often limit direct data sharing.
In cloud-based control design, gain tuning is often outsourced while encrypted input/output data is preserved to protect sensitive information.
Secure remote tuning is also critical for systems such as UAVs and robotic fleets, where on-site access is limited and communication channels may be insecure.
In these settings, performing gain tuning directly on encrypted data provides a practical and effective solution.

The objective of this study is to propose an encrypted-data-driven gain tuning framework that accommodates different homomorphic encryption schemes, enhancing cybersecurity in data-driven gain tuning.
Specifically, this study investigates the ElGamal and CKKS encryption schemes to develop a method for updating state feedback gains while keeping the control system’s input and output data encrypted.
The proposed method employs secure computation of cofactor expansion to perform matrix inversion, enabling the gain-tuning process to be delegated to an external server without risking information leakage.
To validate the proposed method under a 128-bit security level, two numerical examples are analyzed, focusing on quantization errors and sensitivity differences between the encryption schemes.
By achieving both security and gain-tuning performance, this research contributes to the practical implementation of secure CPS, offering guidelines for selecting encryption schemes in encrypted control systems.
Additionally, the proposed method performs exact matrix inversion via cofactor expansion, which differs from~\cite{Ada24} that relies on iterative approximation over encrypted data.

The contributions of this study are summarized as follows:
i) This study proposes an encrypted-data-driven gain tuning method, referred to as confidential FRIT, implemented using either the ElGamal or CKKS encryption scheme.
ii) The effectiveness of the proposed method is validated through two numerical examples.
By appropriately tuning quantization sensitivity parameters for both the ElGamal and CKKS encryption schemes, the proposed method achieves performance comparable to conventional state feedback gain update methods while ensuring data confidentiality through secure computation.

The structure of this paper is as follows.
\textbf{Section~\ref{sec:pre}} provides the notations and encryption schemes as preliminaries.
\textbf{Section~\ref{sec:prob}} formulates the problem of confidential FRIT.
\textbf{Section~\ref{sec:cFRIT}} proposes the confidential FRIT methods based on the ElGamal or CKKS encryption scheme.
\textbf{Section~\ref{sec:exp}} demonstrates the effectiveness of the proposed confidential FRIT through two numerical examples.
\textbf{Section~\ref{sec:con}} concludes the paper.

\section{Preliminaries}\label{sec:pre}
\subsection{Notations}
The sets of real numbers, integers, rational numbers, security parameters, key pairs, public keys, private keys, plaintexts, and ciphertexts are represented as $\R$, $\Z$, $\Q$, $\S$, $\K$, $\Kp$, $\Ks$, $\M$, and $\C$, respectively. 
The notations of sets are as follows: 
$\R^-:=\{x\in\R\,|\,x<0\}$, 
$\Z^+\coloneqq\{z\in\Z\,\mid\,0\leq z\}$,
$\Z_n \coloneqq \{z \in \Z \mid 0 \leq z < n\}$, and 
$\Z_{>q}:=\{z\in\Z\,|\,z>q\}$.
The set of vectors of size $n$ is denoted by $\R^n$, and the set of matrices of size $m \times n$ is denoted by $\R^{m \times n}$.  
The $i$-th component of a vector $v$ is denoted by $v_i$, and its $\ell_2$-norm is represented as $\lVert v \rVert$.  
The $(i,j)$-th component of a matrix $X$ is denoted by~$X_{ij}$. 

For a regular matrix $X\in\R^{n\times n}$, its inverse $X^{-1}$ is expressed as  
\begin{align*}
X^{-1} 
= \frac{1}{\vert X \vert} \begin{bmatrix}
x_{11} & x_{21} & \cdots & x_{n1} \\
x_{12} & x_{22} & \cdots & x_{n2} \\
\vdots & \vdots & \ddots & \vdots \\
x_{1n} & x_{2n} & \cdots & x_{nn}
\end{bmatrix},
\end{align*}
where $x_{ij}$ is the $(i,j)$ cofactor of the matrix $X$.  
For a permutation $\sigma = \begin{pmatrix}
  1 & 2 & 3 & \cdots & n \\
  i_1 & i_2 & i_3 & \cdots & i_n
\end{pmatrix}$, $\mathrm{sgn}\,\sigma=1$ if $\sigma$ is an even permutation; otherwise, -1 returns.
The determinant of a matrix $X\in\R^{n\times n}$ is expressed as $
  \vert X\vert=\sum_{\sigma\in\mathbb{S}^n} 
  \big\{(\mathrm{sgn}\,\sigma)\prod_{i=1}^n x_{i\sigma(i)}\big\}$,
where $\sigma$ represents a permutation and $\mathbb{S}^n$ is the symmetric group of degree $n$ (the automorphism group of $\{1,\cdots,n\}$).

The polynomial residue ring is denoted as $\mathcal{R}=\Z[X]/(X^d+1)$.
The set of all polynomials in $\mathcal{R}$ whose coefficients belong to $\{z\in\Z\mid -\frac{n}{2}<z\leq \frac{n}{2}\}$ is denoted as~$\mathcal{R}_n$. 
The symbol $\bmod$ represents the residue operator. 
For a polynomial $a$ and an integer $N$, $a\bmod N$ takes the residue of each coefficient of $a$ modulo $N$. 
For two polynomials $a$ and $b$, the notation $a\cdot b$ represents the polynomial multiplication in the polynomial residue ring $\mathcal{R}$. 

\subsection{ElGamal Encryption}\label{ElGamal}
The ElGamal encryption scheme~\cite{elgamal1985public}, which is public-key multiplicatively homomorphic encryption, is defined as a tuple $\E_E\coloneqq (\Gen,\Enc,\Dec)$. 
$\Gen:\S\to\K=\Kp\times\Ks$
$:k\mapsto (\pk,\sk)=((p,q,g,h),s)$,
$\Enc:\Kp\times\M\to\C$,
$:(\pk,m)\mapsto c=(g^r\bmod p,mh^r\bmod p)$, and
$\Dec:\Ks\times\C\to\M$,
$:(\sk,(c_1,c_2))\mapsto c_1^{-s}c_2\bmod p$, 
which are the key generation, encryption, and decryption algorithms, respectively.
$\pk\in\Kp$ and $\sk\in\Ks$ are the public and private keys.
$q$ is a $k$-bit prime, and $p=2q+1$ is a safe prime.
$g$ is the generator of a cyclic group $\G\coloneqq\{g^i\bmod p\,\vert\,i\in\Z_q\}=\M\subset\Z_p\backslash\{0\}$, 
where 
$g^q\bmod p=1$, 
$h=g^s\bmod p$, 
$\M=\G$,
$\C=\G^2$, and 
$r$ and $s$ are random values chosen from $\Z_q$.
Note that for $m$, $m'\in\M$, the ElGamal encryption scheme possesses multiplicative homomorphism: 
$\Dec(\sk,\Enc(\pk,m)\ast\Enc(\pk,m')\bmod p)=mm'\bmod p$, where $\ast$ denotes the Hadamard product.
Hereafter, for simplicity, the arguments $\pk$ and $\sk$ of $\Enc$ and $\Dec$ will be omitted.

Furthermore, to integrate the encryption scheme into control systems, the encoder $\Ecd_{\gamma_e}:\R\to\M$ and decoder $\Dcd_{\gamma_e}:\M\to\R$ are needed~\cite{teranishi2019stability}:
\begin{align}
\Ecd_{\gamma_e}:x\mapsto\check{x} 
&=\mathrm{min}\{\argmin_{m\in\M} \vert x\gamma_e^{-1}+p\textbf{1}_{\R_{<0}}(x)-m\vert\}
=x\gamma_e^{-1}+p\textbf{1}_{\R_{<0}}(x)+\delta,\label{ecd_e}\\
\Dcd_{\gamma_e}:\check{x}\mapsto\bar{x}&=\gamma_e(\check{x}-p\textbf{1}_{\R_{>q}}(\check{x})), \label{dcd_e}
\end{align}
where $\gamma_e\in (0,1)$ is the sensitivity parameter ($\gamma_e^{-1}$ is also called a quantization gain); given set $S$, $\textbf{1}_{S}(x)=1$ if $x\in S$; otherwise, it returns 0. 
In this case, the rounding error over the plaintext space is denoted as $\delta:=\delta(x,\gamma_e)=|\check{x}-x\gamma_e^{-1}-p\textbf{1}_{\R_{<0}}(x)|\in\Z^+$.
The encryption-induced quantization error is denoted as $x-\bar{x}$ through the operation $\Dcd_{\gamma_e}\circ\Ecd_{\gamma_e}$.

\subsection{CKKS Encryption}
The CKKS encryption scheme~\cite{cheon2017homomorphic}, which is a fully homomorphic encryption scheme, is defined as a tuple $\E_C\coloneqq(\Gen,\Enc,\Dec,\Res,\Add,\Mult)$. 
$\Gen:(d,L,\gamma_c,q_0)\mapsto(\sk,\pk,\evk)$,
$\Enc:(\check{x},\pk)\mapsto x^{ct}$,
$\Dec:(x^{ct},\sk)\mapsto\check{x}$,
$\Res:x^{ct}_l=(ct_{10},ct_{11})\in\mathcal{R}^2_{q_{l}}\mapsto x^{ct}_{l-1}=(ct_{20},ct_{21})\in\mathcal{R}^2_{q_{l-1}}$,
$\Add:(x^{ct}_1,x^{ct}_2)\mapsto x_{\textrm{add}}^{ct}=x^{ct}_1\oplus x^{ct}_2$, and 
$\Mult:(x^{ct}_1,x^{ct}_2)\mapsto x^{ct}_{\textrm{mul}}=x^{ct}_1\otimes x^{ct}_2$ are key generation, encryption, decryption, rescaling algorithms, and addition and multiplication operations over encrypted data, respectively.
$d\in\Z$ is the degree of the polynomial, 
$L\in\Z$ is the level, 
$\gamma_c\in(0,1)$ is the sensitivity parameter ($\gamma_c^{-1}$ is also called the scale parameter), 
$q_0\in\R$ is chosen to be sufficiently large compared to $\gamma_c^{-1}$, and  
$\pk$, $\sk$, and $\evk$ are the public, private, and evaluation keys, respectively.
The CKKS encryption~\cite{cheon2017homomorphic,Cheon18} can handle complex numbers to be encrypted, and its scheme includes processes of encoding and decoding:
\begin{align}
\Ecd_{\gamma_c}:x\mapsto\check{x}=\lfloor x\gamma_c^{-1}\rceil,\quad 
\Dcd_{\gamma_c}:\check{x}\mapsto\bar{x}=\lfloor\check{x}\gamma_c\rceil,
\label{ecd}
\end{align}
To simplify the discussion, this study uses only real numbers and separates the encoding and decoding processes from the CKKS algorithms.
In this case, the plaintext, ciphertext, and key spaces are defined as $\M=\Z_{\pm n}:=\{z\in\Z\mid-n/2<z\leq n/2\}$, $\C:=\mathcal{R}_{q_l}^2$, and $\K:=\mathcal{R}$, respectively.
Similarly, the precision of the plaintext during computation or decryption can be specified by the magnitude of $\gamma_c$.
The details of each of the algorithms will be introduced in \textbf{Appendix~\ref{app:ckks}}.

\section{Confidential FRIT Problem}\label{sec:prob}
This section presents the confidential FRIT setup, which involves developing a gain-tuning method performed over encrypted data of state-feedback control systems using homomorphic encryption, allowing to outsource secure computation processing to an external server.  

A linear plant is described by the discrete-time state-space representation:
\begin{align}
x(k+1)=Ax(k)+Bu(k), \label{plant}
\end{align}
where 
$k\in\Z^+$, $x\in\R^n$, and $u\in\R$ are the step, state (measurable), control input, and output, respectively.
The initial state is assumed to be zero, i.e., $x(0)=0$.
The system order for minimal realization is assumed to be known, and the plant model \eqref{plant} is assumed to be unknown to a control system designer.
For the plant \eqref{plant}, we consider the following controller’s structure, designed to achieve $\lim_{k\to\infty}x(k)=0$,
\begin{align}
u(k)=Fx(k)+v(k),\label{fcl}
\end{align}
where $F\in\R^{1\times n}$ is the state feedback gain, and 
$v\in\R$ is an intentionally applied signal used by the designer to evaluate the closed-loop system: $H(F)\coloneqq\frac{x(z)}{v(z)}=(zI-A-BF)^{-1}B\in\R^n[z]$.
The control system designer (client) is assumed to have access to the control system, its data of $u$ and $x$, and an initial gain $F=F_{\textrm{ini}}$ and aims to obtain a new gain $F^*$, achieving the desired closed-loop characteristics $H_d$. 
To do so, the client collects the encrypted dataset $\mathcal{D}$ using a specific homomorphic encryption scheme to outsource securely and receives the encrypted dataset $\mathcal{F}$, which corresponds to the extracted new gain $F^*_\mathcal{E}$, as shown in Fig.~\ref{fig:setup}.
In this setup, no private keys of the homomorphic encryption scheme are sent to the server, ensuring that the decryption process is not performed as part of the computation of $\mathcal{F}$ on the server.
Thus, this study considers the following problem referred to as the confidential FRIT problem.

\begin{problem}
For the feedback control system consisting of the plant \eqref{plant} and the controller \eqref{fcl}, clarify the encrypted datasets $\mathcal{D}$ and $\mathcal{F}$, as well as a confidential FRIT algorithm based on homomorphic operations, such that the feedback gain $F^*_\mathcal{E}$ extracted from $\mathcal{F}$ is equivalent to that obtained by the conventional FRIT algorithm~\cite{soma2004new}.
Additionally, the confidential FRIT algorithm is implemented using either ElGamal or CKKS encryption.
\end{problem}

The confidential FRIT problem is a secure implementation of the conventional FRIT~\cite{soma2004new}, using homomorphic encryption to enhance cybersecurity by mitigating the risks of the transmitted datasets being modified by an adversary or processed data being tampered with by a semi-honest server administrator.
This approach enables the secure updating of the state feedback gain using the retained data while outsourcing necessary computations to an external server, effectively preventing threats such as eavesdropping or unauthorized access.

\begin{figure}[t!]
\centering
\includegraphics[width=0.6\linewidth]{./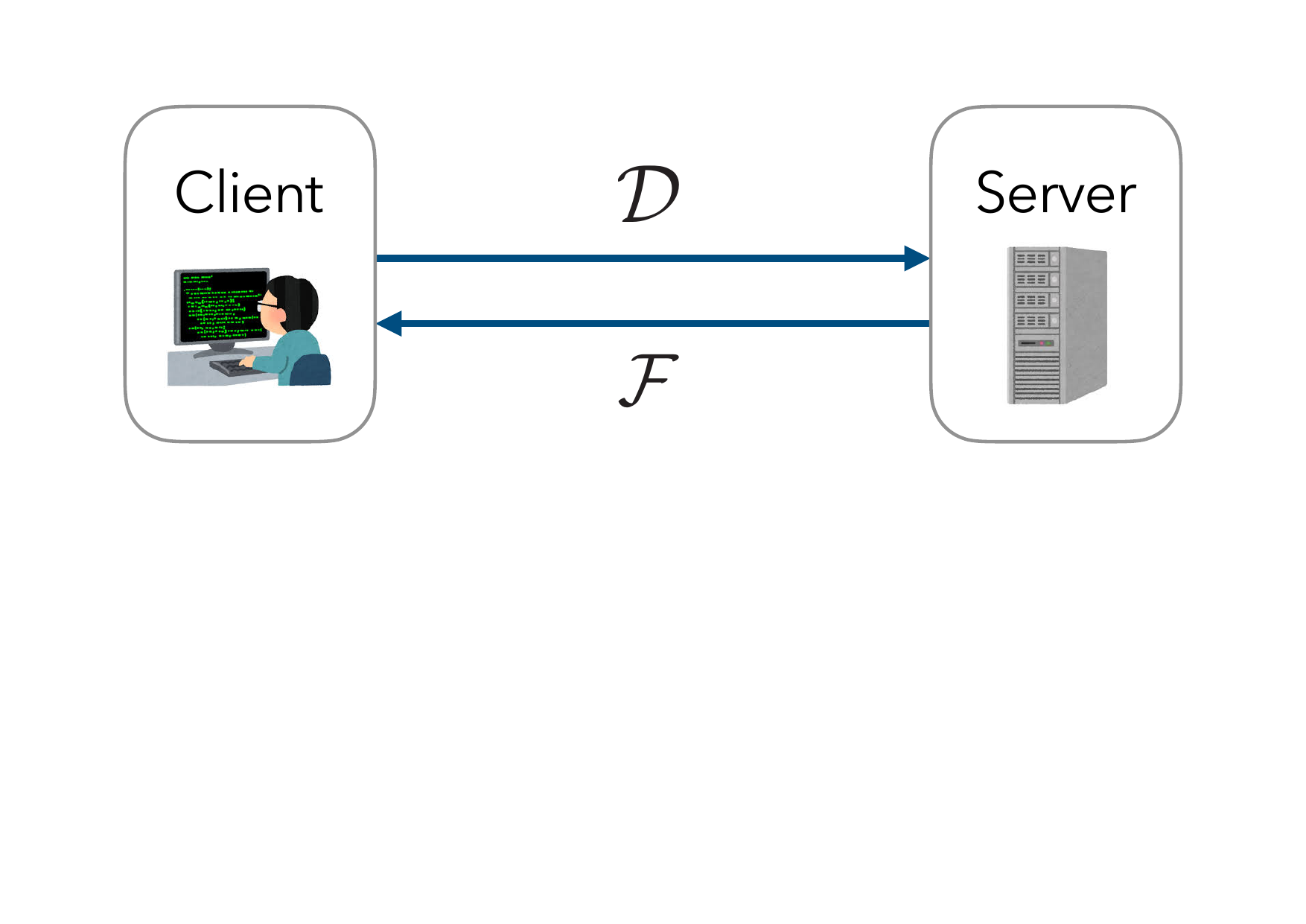}
\caption{Setups of conventional and confidential FRITs, which handle unencrypted and encrypted datasets ($\mathcal{D},\mathcal{F}$), respectively.}
\label{fig:setup}
\end{figure}

\section{Proposed Confidential FRIT}\label{sec:cFRIT}
This section proposes a confidential framework for updating state-feedback gains over encrypted data using either the ElGamal or CKKS encryption scheme.
The idea of the proposed method is to replace matrix operations involving the conventional FRIT algorithm for matrices $\Gamma\in\R^{nN}$ and $W\in\R^{nN\times n}$:
\begin{align}
F^*=-\Gamma^\top W(W^\top W)^{-1}, \label{original}
\end{align}
with the homomorphic operations defined in the encryption schemes, where $\Gamma$ and $W$ are defined using $N$ signal data points in terms of $x$ and $u$ in \eqref{plant} and \eqref{fcl}, which will be explained in \textbf{Appendix~\ref{app:frit}}.
Specifically, the multiplicative homomorphism of the ElGamal encryption scheme or the additive and multiplicative homomorphisms of the CKKS encryption scheme is used to perform calculations while keeping the data encrypted.
This idea motivates us to derive an alternative expression of \eqref{original} that avoids the direct computation of the inverse matrix.
Focusing on the fact that each element of the inverse matrix can be represented in terms of cofactors, we can obtain the following main theorem.

\begin{theorem}\label{thm1}
The equation \eqref{original} is expressed as follows:
\begin{align}
F^*=
-\sum_{k=1}^{(n-1)!}\sum_{i=1}^{nN}\sum_{l=1}^n\Gamma_iW_{il}\Phi_{k,l}=\sum_{j=1}^MF_j^*,
\label{F2}
\end{align}
where $\Gamma_i\in\R$ is the $i$-th element of the vector $\Gamma\in\R^{nN}$,  
$W_{il}\in\R$ is the $(i,l)$-th element of the matrix $W\in\R^{nN\times n}$, 
$\Phi_{k,l}\in\R^{1\times n}$ is the $l$-th row vector of the intermediate generation matrix $\Phi_k\in\R^{n\times n}$ such that $(W^\top W)^{-1}=\sum_{k=1}^{(n-1)!}\Phi_k$, $F^*_j\in\R^{1\times n}$, $\forall j$, and 
$M=(n-1)!Nn+(n-1)Nn$.
\end{theorem}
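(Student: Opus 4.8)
The plan is to reduce the asserted matrix identity to a pure index computation and then insert the cofactor-based decomposition of the inverse that is stated as a hypothesis. First I would expand $F^*=-\Gamma^\top W(W^\top W)^{-1}$ entrywise. Since $\Gamma^\top W\in\R^{1\times n}$ has $l$-th entry $(\Gamma^\top W)_l=\sum_{i=1}^{nN}\Gamma_iW_{il}$, and right-multiplying a row vector by the $n\times n$ matrix $(W^\top W)^{-1}$ is exactly a combination of that matrix's rows, I obtain
\begin{align*}
F^*=-\sum_{l=1}^n\Big(\sum_{i=1}^{nN}\Gamma_iW_{il}\Big)\big[(W^\top W)^{-1}\big]_{l,:},
\end{align*}
where $[\,\cdot\,]_{l,:}$ denotes the $l$-th row. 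This step is routine linear algebra and carries no real difficulty.

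The conceptual core is the decomposition $(W^\top W)^{-1}=\sum_{k=1}^{(n-1)!}\Phi_k$. Here I would invoke the cofactor expression for the inverse recalled in Section~\ref{sec:pre}: each entry of $(W^\top W)^{-1}$ equals $|W^\top W|^{-1}$ times a cofactor of $W^\top W$, and every such cofactor is, up to sign, the determinant of an $(n-1)\times(n-1)$ submatrix. Applying the permutation formula for that determinant expresses each entry as a signed sum over the $(n-1)!$ permutations in $\mathbb{S}^{n-1}$. The intermediate matrix $\Phi_k$ is then defined by collecting, simultaneously across all $n^2$ entries, the single term associated with the $k$-th permutation (including the shared scalar $|W^\top W|^{-1}$ and the sign $\mathrm{sgn}\,\sigma_k$); summing these $(n-1)!$ matrices reconstructs the inverse. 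Substituting $[(W^\top W)^{-1}]_{l,:}=\sum_{k=1}^{(n-1)!}\Phi_{k,l}$ into the display above and interchanging the finite sums yields the triple-sum form in \eqref{F2}, which establishes the first equality.

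It then remains to flatten the multi-index $(k,i,l)$ into a single running index and to verify the count $M=(n-1)!Nn+(n-1)Nn$. Here I would tally the elementary row-vector summands coming from the two ingredients that constitute the inverse: the $(n-1)!$ permutation terms of the adjugate (numerator), contracted against the $nN$ values of $i$, yield the dominant $(n-1)!Nn$ vectors, while the terms generated by the determinant factor $|W^\top W|$ account for the remaining $(n-1)Nn$. Reindexing the resulting grouped products as $F_1^*,\dots,F_M^*\in\R^{1\times n}$ then gives $F^*=\sum_{j=1}^M F_j^*$.

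The main obstacle I anticipate lives in the second and third paragraphs. First, one must check that the permutation index $k$ can be used coherently across all entries of the inverse: the minors attached to different positions are determinants of \emph{different} submatrices, so defining $\Phi_k$ requires a fixed, position-independent enumeration of $\mathbb{S}^{n-1}$ together with a verification that the per-entry sign bookkeeping is consistent. Second, the arithmetic behind $M$ must attribute every elementary product to either the adjugate or the determinant without double counting; getting the split $(n-1)!Nn$ versus $(n-1)Nn$ exactly right is where the accounting is delicate, whereas the algebraic identity itself follows immediately once the decomposition of the inverse is in hand.
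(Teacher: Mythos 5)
Your first two paragraphs reproduce the paper's argument: entrywise expansion of $-\Gamma^\top W(W^\top W)^{-1}$, followed by the cofactor/adjugate formula with each $(n-1)\times(n-1)$ minor expanded as a signed sum over the $(n-1)!$ permutations, and $\Phi_k$ defined by collecting the $k$-th permutation term (with its sign and the common factor $\lvert W^\top W\rvert^{-1}$) simultaneously across all $n^2$ entries. That part is correct and is exactly how the paper proves $(W^\top W)^{-1}=\sum_{k}\Phi_k$ and the triple-sum form in \eqref{F2}; the ``position-independent enumeration'' issue you flag is resolved in the paper by fixing a single signed permutation matrix $\Sigma$ in \eqref{sigma} that is reused for every entry, which is legitimate because every minor is an $(n-1)\times(n-1)$ determinant and hence a sum over the same index set of permutations.

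The gap is in your third paragraph, the derivation of $M$. Your proposed split --- $(n-1)!Nn$ summands ``from the adjugate'' plus $(n-1)Nn$ summands ``generated by the determinant factor'' --- cannot be made to work: the factor $\lvert W^\top W\rvert^{-1}$ is one shared scalar already absorbed into every $\Phi_k$ (it is a single encrypted value in the algorithms), so it contributes no additional summands whatsoever. Moreover, the number of elementary products in the triple sum is $(n-1)!\cdot nN\cdot n=(n-1)!\,n^2N$, which differs from $M=(n-1)!Nn+(n-1)Nn$ for every $n\geq 3$, so no partition of those products by provenance can yield the stated $M$. In the paper, $M$ is not a term count at all: it is the maximum value of the flattening index $j(k,i,l)=(k-1)nN+(i-1)n+l$ over $k\in\{1,\dots,(n-1)!\}$, $i\in\{1,\dots,nN\}$, $l\in\{1,\dots,n\}$, namely
\begin{align}
j\big((n-1)!,nN,n\big)=\big((n-1)!-1\big)nN+(nN-1)n+n=(n-1)!Nn+(n-1)Nn,
\end{align}
with $F_j^*$ defined as the term (or terms, since this map is not injective) assigned to index $j$. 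Without adopting that specific indexing convention, your argument cannot arrive at the value of $M$ asserted in the statement, so the second equality in \eqref{F2} remains unproved in your write-up.
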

\begin{proof}
Defining $\Phi_{k,ji}:=\lvert\Psi\rvert^{-1}(-1)^{i+j}\tilde{\Psi}_{k,ij}$ with $\Psi=W^\top W$, we show that for the $(j,i)$ element $\Phi_{k,ji}$ of the intermediate generation matrix $\Phi_k$,  
$\sum_{k=1}^{(n-1)!}\Phi_{k,ji}=\lvert\Psi\rvert^{-1}\tilde{\Psi}_{ij}$ holds, 
where $\tilde{\Psi}_{ij}$ is the matrix obtained by removing the $i$-th row and $j$-th column of the matrix $\Psi$.
Expanding the determinant calculation,
\begin{align}
\sum_{k=1}^{(n-1)!}\Phi_{k,ji}&=\sum_{k=1}^{(n-1)!}\lvert\Psi\rvert^{-1}(-1)^{i+j}\tilde{\Psi}_{k,ij}
=\lvert\Psi\rvert^{-1}(-1)^{i+j}\sum_{k=1}^{(n-1)!}\mathrm{sgn}\sigma_k\prod_{l=1}^{n-1}\tilde{\Psi}_{l,\sigma_k(l)}\\
&=\lvert\Psi\rvert^{-1}(-1)^{i+j}\lvert\tilde{\Psi}_{ij}\rvert =\lvert\Psi\rvert^{-1}\tilde{\Psi}_{ij},
\end{align}
where $\mathrm{sgn}\sigma_k$ represents the sign of the permutation $\sigma_k$, which corresponds to the $k$-th row of the signed permutation matrix $\Sigma$:
\begin{align}
\Sigma=
\begin{bmatrix}
\sigma_1(1) & \sigma_1(2) & \cdots & \sigma_1(n-1) & \mathrm{sgn}\sigma_1 \\
\sigma_2(1) & \sigma_2(2) & \cdots & \sigma_2(n-1) & \mathrm{sgn}\sigma_2 \\
\vdots & \vdots & \ddots & \vdots & \vdots \\
\sigma_{(n-1)!}(1) & \sigma_{(n-1)!}(2) & \cdots & \sigma_{(n-1)!}(n-1) & \mathrm{sgn}\sigma_{(n-1)!}
\end{bmatrix}\in\mathbb{Z}^{(n-1)!\times n}.
\label{sigma}
\end{align}
Accordingly, the following holds:
\begin{align}
\sum_{k=1}^{(n-1)!} \Phi_k &=\begin{bmatrix}
\lvert\Psi\rvert^{-1}\tilde{\Psi}_{11} & \lvert\Psi\rvert^{-1}\tilde{\Psi}_{21} & \cdots & \lvert\Psi\rvert^{-1}\tilde{\Psi}_{n1}\\
\lvert\Psi\rvert^{-1}\tilde{\Psi}_{12} & \lvert\Psi\rvert^{-1}\tilde{\Psi}_{22} & \cdots & \lvert\Psi\rvert^{-1}\tilde{\Psi}_{n2}\\
\vdots & \vdots & \ddots & \vdots\\
\lvert\Psi\rvert^{-1}\tilde{\Psi}_{1n} & \lvert\Psi\rvert^{-1}\tilde{\Psi}_{2n} & \cdots & \lvert\Psi\rvert^{-1}\tilde{\Psi}_{nn}
\end{bmatrix}
=\frac{1}{\lvert\Psi\rvert}\tilde{\Psi}=\Psi^{-1}.
\end{align}
Thus, the inverse matrix operation can be replaced by the summation of intermediate generation matrices $\Phi_k$, leading to
$(W^\top W)^{-1}=\sum_{k=1}^{(n-1)!}\Phi_k$.
Using this replacement, we obtain: 
\begin{align}
F^*&=-\Gamma^\top W(W^\top W)^{-1}
=-\Gamma^\top W\sum_{k=1}^{(n-1)!}\Phi_k
=-\sum_{k=1}^{(n-1)!}\Gamma^\top W\Phi_k,\\
&=-\sum_{k=1}^{(n-1)!}\left(\sum_{i=1}^{nN}\Gamma_i W_i\Phi_k\right)
=-\sum_{k=1}^{(n-1)!}\sum_{i=1}^{nN}\Gamma_i W_i\Phi_k,\\
&=-\sum_{k=1}^{(n-1)!}\sum_{i=1}^{nN}\left(\Gamma_i \sum_{l=1}^nW_{il}\Phi_{k,l}\right)
=-\sum_{k=1}^{(n-1)!}\sum_{i=1}^{nN}\sum_{l=1}^n\Gamma_iW_{il} \Phi_{k,l},\label{F_j}
\end{align}
where $W_i\in\R^n$ is the $i$-th row vector of $W$.
Defining $F^*_j:=-\Gamma_i W_{il}\Phi_{k,l}$ for $j=(k-1)nN+(i-1)n+l$, we obtain $F^*=\sum_{j=1}^MF^*_j$, where $M>0$.
The parameter $j$ increases sequentially from $1$ to $M$, and is defined as:
$j(k,i,l)=(k-1)nN+(i-1)n+l$, 
$\forall k\in\{1,2,\dots,(n-1)!\}$,
$i\in\{1,2,\dots,nN\}$, and 
$l\in\{1,2,\dots,n\}$.
Thus, the minimum and maximum values of $j$ are $j(1,1,1)=1$ and $j((n-1)!,nN,n)=((n-1)!-1)nN+(nN-1)n+n=M$, repspectively.
This terminates the proof.
\end{proof}

\begin{proposition}\label{prop}
Using the multiplicative homomorphism of ElGamal encryption and the additive and multiplicative homomorphisms of CKKS encryption, the secure computation of \eqref{F2} is realized by each of the following equations, respectively:
\begin{align}
\bar{F}_\mathrm{E}^*&:=\Dcd_{\gamma_e}\left(\sum_{k=1}^{(n-1)!}\sum_{i=1}^{nN}\sum_{l=1}^{n}
\Dec\left(\Enc(\check{-1})\ast\Enc(\check{\Gamma}_i)*\Enc(\check{W}_{il})*\Enc(\check{\Phi}_{k,l})\right)\right),\label{FE1}\\
&=\Dcd_{\gamma_e}\left(\sum_{j=1}^{M}\Dec\left(\Enc(\check{F}_j^*)\right)\right),\label{FE2}\\
\bar{F}_\mathrm{C}^*&:=\Dcd_{\gamma_c}\left(\Dec\left(\bigoplus_{k=1}^{(n-1)!}\bigoplus_{i=1}^{nN}\bigoplus_{l=1}^{n}
\Enc(\check{-1})\otimes\Enc(\check{\Gamma}_i)\otimes\Enc(\check{W}_{il})\otimes\Enc(\check{\Phi}_{k,l})\right)\right),\label{FC1}\\
&=\Dcd_{\gamma_c}\left(\Dec\left(\bigoplus_{j=1}^{M}\Enc(\check{F}_j^*)\right)\right),\label{FC2}
\end{align}
where the operator $\bigoplus$ denotes homomorphic addition over encrypted data, defined as,
\begin{align}
\bigoplus_{i=1}^n\Enc(m_i):=\Enc(m_1)\oplus\Enc(m_2)\oplus\cdots\oplus\Enc(m_n),\ \  \forall m_i\in\M. 
\end{align}
Furthermore, we assume that the quantization error is sufficiently small to be negligible.
Then, there exists a sufficiently small value $\epsilon\geq 0$ such that $||\bar{F}_{\mathrm{E}}^*-F^*||<\epsilon$ and $||\bar{F}_{\mathrm{C}}^*-F^*||<\epsilon$ hold, respectively.
\end{proposition}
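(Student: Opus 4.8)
The plan is to split the argument into two scheme-specific verifications and, in each, to reduce the claim to the exact identity $F^*=\sum_{j=1}^M F_j^*$ already furnished by Theorem~\ref{thm1}, where each summand $F_j^*=-\Gamma_i W_{il}\Phi_{k,l}$ is the product of the four encoded quantities $\check{-1}$, $\check{\Gamma}_i$, $\check{W}_{il}$, and $\check{\Phi}_{k,l}$ (the last acting component-wise, since $\Phi_{k,l}\in\R^{1\times n}$). Concretely, there are two things to establish: first, that the homomorphic pipelines in \eqref{FE1} and \eqref{FC1} evaluate to the encoded summands $\Enc(\check{F}_j^*)$ and their aggregate, so that \eqref{FE2} and \eqref{FC2} follow; and second, that after decoding the recovered gains differ from $F^*$ only by accumulated quantization noise, which the hypothesis renders negligible.

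For the ElGamal case I would invoke the multiplicative homomorphism $\Dec(\Enc(m)\ast\Enc(m'))=mm'\bmod p$ repeatedly, so that $\Dec\big(\Enc(\check{-1})\ast\Enc(\check{\Gamma}_i)\ast\Enc(\check{W}_{il})\ast\Enc(\check{\Phi}_{k,l})\big)=\check{F}_j^*\bmod p$ for each index $j(k,i,l)$ of Theorem~\ref{thm1}. Because ElGamal admits no additive homomorphism, the triple summation must be performed on the decrypted plaintexts, which is precisely why $\sum$ sits outside $\Dec$ in \eqref{FE1}; collecting the $M$ product terms yields \eqref{FE2}. Applying $\Dcd_{\gamma_e}$ and using the encoder/decoder relations \eqref{ecd_e}--\eqref{dcd_e} then writes each decoded summand as $F_j^*+e_j$ with quantization error $e_j$, so that $\bar{F}_\mathrm{E}^*=F^*+\sum_{j=1}^M e_j$ by linearity of $\Dcd_{\gamma_e}$ over the sum (absent modular wraparound).

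For CKKS the identical factorization is evaluated entirely under encryption: the multiplicative homomorphism $\otimes$ forms each $\Enc(\check{F}_j^*)$ and the additive homomorphism $\oplus$ accumulates them, so by the approximate correctness of $\Dec$ the aggregate satisfies $\Dec\big(\bigoplus_{j=1}^M \Enc(\check{F}_j^*)\big)=\sum_{j=1}^M \check{F}_j^*$ up to the intrinsic CKKS approximation error, which establishes \eqref{FC1}=\eqref{FC2}. Decoding with $\Dcd_{\gamma_c}$ through \eqref{ecd} then produces $\bar{F}_\mathrm{C}^*=F^*+\sum_{j=1}^M e_j'$, where $e_j'$ now bundles the encoding rounding together with the rescaling and approximation noise of the $\otimes$ and $\oplus$ operations.

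It then remains to bound the residuals: by the triangle inequality $\|\bar{F}_\mathrm{E}^*-F^*\|\le\sum_{j=1}^M\|e_j\|$ and $\|\bar{F}_\mathrm{C}^*-F^*\|\le\sum_{j=1}^M\|e_j'\|$, and under the stated assumption that the per-term quantization error is negligible each norm can be forced below $\epsilon/M$ by taking the sensitivity parameters $\gamma_e,\gamma_c$ small enough, so a common $\epsilon\ge 0$ with $\|\bar{F}_\mathrm{E}^*-F^*\|<\epsilon$ and $\|\bar{F}_\mathrm{C}^*-F^*\|<\epsilon$ exists. The main obstacle I anticipate is this error-accumulation step rather than the homomorphism bookkeeping: for ElGamal one must guarantee that the fourfold encoded products never wrap around modulo $p$ (i.e.\ remain inside the message group $\G$) so that $\ast$ faithfully realizes integer multiplication, and that the decoder scale matches the $\gamma_e^{-4}$ magnification incurred by four multiplications; for CKKS one must ensure that the multiplicative depth and the rescaling after each $\otimes$ keep the accumulated approximation noise within the negligible regime assumed. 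Controlling these scheme-specific error sources uniformly across all $M$ terms is where the quantitative content of the proof resides.
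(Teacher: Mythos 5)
Your proposal is correct and takes essentially the same route the paper intends: the paper's own proof is a single sentence observing that the operations in \eqref{F2} of Theorem~\ref{thm1} are simply replaced by their corresponding homomorphic counterparts, with all details omitted. Your elaboration---per-term multiplicative homomorphism with plaintext-side summation for ElGamal, fully encrypted product-and-sum for CKKS, and a triangle-inequality bound on the accumulated quantization error---is precisely that replacement spelled out, and the scale-tracking and modular-wraparound issues you flag at the end are the ones the paper handles separately via the $\xi_j$ exponent bookkeeping in Algorithms~\ref{elgamal_server_processing}--\ref{elgamal_client_processing} and Proposition~\ref{theo}.
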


\begin{proof}
Since the operations in \eqref{F2} in \textbf{Theorem~\ref{thm1}} are simply replaced with their corresponding homomorphic counterparts, the proof is omitted.
\end{proof}

\begin{remark}
The computational burden increases super-exponentially with the state dimension $n$ of the plant, since the upper limit in \eqref{FE2} and \eqref{FC2} involves $(n-1)!$, respectively.
While the proposed algorithms are suitable for offline use in plants with small to moderate dimensions, the computation cost becomes prohibitive for high-dimensional systems due to the factorial growth in complexity. 
To address this issue, future work will consider approximation-based methods for computing the inverse matrix to reduce the computational load.
\end{remark}

\subsection{ElGamal-based Confidential FRIT}
We present the ElGamal-based confidential FRIT procedure as follows. 
First, the client generates the public and private keys of the ElGamal encryption scheme, as described in \textbf{Section~\ref{ElGamal}}, and prepares the encrypted dataset $\mathcal{D}$ using an appropriate sensitivity parameter $\gamma_e$, 
\begin{align}
\mathcal{D}:=\left\{
\Enc(\check{\Gamma}),
\Enc(\check{W}),
\Enc(\check{\Psi}),
\Enc(\rvert\check{\Psi}\lvert^{-1}),
\pk\right\},
\end{align} 
where $\check{\Psi}=\Ecd_{\gamma_e}(W^\top W)$. 
The server then generates an encrypted dataset $\mathcal{F}$ from $\mathcal{D}$ using \textbf{Algorithm~\ref{elgamal_server_processing}}, which handles the argument of the $\Dec$ function in~\eqref{FE1},
\begin{align}
\mathcal{F}:=\left\{
\Enc(\check{F}_j^*),
\xi_j, \forall j\in\{1,2,\cdots,M\}\right\}.
\end{align}
The client uses the received dataset $\mathcal{F}$ to compute $\bar{F}^*_\mathrm{E}$ according to \textbf{Algorithm~\ref{elgamal_client_processing}}, which implements the operation~\eqref{FE2}.
Here, $\xi_j\in\Z^n$ is the element of $\mathcal{F}$ that stores the exponent information related to $\gamma_e$, required for decoding $\check{F}_j^*$.
In addition, \textbf{Algorithm~\ref{elgamal_phi_omega_generation}}, which is called within \textbf{Algorithm~\ref{elgamal_server_processing}}, computes the determinant required for the proof of \textbf{Theorem~\ref{thm1}}, reflecting the ElGamal-specific definition of $\mathcal{D}$ and $\mathcal{F}$.

Next, we analyze the relationship between the sensitivity parameter and the quantization error for ElGamal-based confidential FRIT, where the product of $n$ multiplications is considered.

\begin{proposition}\label{theo}
Consider the ElGamal encryption scheme with a $k$-bit prime $q$ and $p=2q+1$. 
Let $x_i\in\R$, $\forall i\in\{1,2,\cdots,n\}$, denote scalar values to be multiplied.
For a given sensitivity parameter $\gamma_e$, suppose rounding errors $\delta(x_i,\gamma_e)=|\check{x}_i-x_i\gamma_e^{-1}-p\textbf{1}_{\R_{<0}}(x_i)|\in\Z^+$ satisfy the inequalities, 
$-q-1<\gamma_e^{-n}\prod_{i=1}^n(x_i+\gamma_e\delta_i)\leq q$. Then, the following equation holds:
\begin{align}
\Dcd_{\gamma_e^n}\left(\Dec\left(\overset{n}{\prod}_{i=1}^*\Enc\left(\Ecd_{\gamma_e}(x_i)\right)\right)\right)=\prod_{i=1}^n\left(x_i+\gamma_e\delta_i\right),\label{thm:eq}
\end{align}
where $\check{x}_i=\Ecd_{\gamma_e}(x_i)$, $\forall i\in\{1,2,\cdots,n\}$, and $\delta_i:=\delta(x_i,\gamma_e)$.
Furthermore, as $\gamma_e\rightarrow 0$, the left-hand side of \eqref{thm:eq} converges to $\Pi_{i=1}^nx_i$.
\end{proposition}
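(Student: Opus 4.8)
The plan is to peel the construction apart one layer at a time, reducing the claim to the multiplicative homomorphism of ElGamal together with careful bookkeeping of the reduction modulo $p$ and of the decoder's sign correction. First I would evaluate the inner homomorphic product: applying $\Dec(\Enc(m)\ast\Enc(m')\bmod p)=mm'\bmod p$ inductively, and using that $\G$ is closed under multiplication modulo $p$ so that each intermediate product is again a valid plaintext, I get $\Dec\bigl(\Enc(\check{x}_1)\ast\cdots\ast\Enc(\check{x}_n)\bigr)=\prod_{i=1}^{n}\check{x}_i\bmod p$, where by \eqref{ecd_e} each $\check{x}_i=x_i\gamma_e^{-1}+p\textbf{1}_{\R_{<0}}(x_i)+\delta_i$.

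Next I would strip the offset terms modulo $p$. Setting $y_i:=\check{x}_i-p\textbf{1}_{\R_{<0}}(x_i)=x_i\gamma_e^{-1}+\delta_i=\gamma_e^{-1}(x_i+\gamma_e\delta_i)$, each $y_i$ is an integer congruent to $\check{x}_i$ modulo $p$, so $\prod_i\check{x}_i\equiv z\pmod p$ with $z:=\prod_i y_i=\gamma_e^{-n}\prod_i(x_i+\gamma_e\delta_i)\in\Z$. Thus the decryption returns the residue of the integer $z$ inside $\{0,\dots,2q\}$. I would then line this up with the decoder \eqref{dcd_e}: the hypothesis $-q-1<\gamma_e^{-n}\prod_i(x_i+\gamma_e\delta_i)\le q$ is exactly $z\in\{-q,\dots,q\}$, so the returned residue equals $z$ when $z\ge 0$ (lying in $\{0,\dots,q\}$) and equals $z+p\in\{q+1,\dots,2q\}$ when $z<0$. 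In both cases the selector $\textbf{1}_{\R_{>q}}$ fires precisely on the negative branch, so $\Dcd_{\gamma_e^n}$ subtracts the right multiple of $p$ and rescales, recovering $\gamma_e^n z=\prod_{i=1}^n(x_i+\gamma_e\delta_i)$, which is \eqref{thm:eq}.

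Finally, for the limiting statement I would use that each $\delta_i$ is a bounded rounding error — the distance from the scaled input to the nearest admissible plaintext — so that $\gamma_e\delta_i\to 0$ as $\gamma_e\to 0$; continuity of the finite product then sends the right-hand side of \eqref{thm:eq} to $\prod_{i=1}^n x_i$.

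I expect the crux to be the range-matching step: one must verify that the interval $(-q-1,q]$ in the hypothesis coincides exactly with the two-sided centering implemented by $\textbf{1}_{\R_{>q}}$ in \eqref{dcd_e}, so that no modular wraparound silently flips the sign of $z$ (this is also where the sensitivity $\gamma_e^n$, rather than $\gamma_e$, must be used). The homomorphism and offset-cancellation steps are routine, and the limit reduces to boundedness — though it tacitly requires that the spacing of the plaintext set $\G$ keep the rounding errors $\delta_i$ controlled, rather than growing, as $\gamma_e\to 0$.
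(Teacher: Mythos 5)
Your proof is correct and follows essentially the same route as the paper's: collapse the ciphertext product via the multiplicative homomorphism, then track the encoder/decoder bookkeeping against the range hypothesis. Yours is in fact the more complete rendering --- the paper treats only the all-positive case explicitly (dismissing $x_i<0$ with ``similarly'' and never addressing mixed signs), whereas your congruence-based offset stripping and the range-matching step make explicit, for every sign pattern, how the hypothesis $-q-1<\gamma_e^{-n}\prod_{i=1}^n(x_i+\gamma_e\delta_i)\le q$ prevents modular wraparound and aligns with the decoder's $\textbf{1}_{\R_{>q}}$ correction.
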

\begin{proof}
Due to the correctness and multiplicative homomorphism of the ElGamal encryption scheme, the left-hand side of \eqref{thm:eq} can be rewritten as: $\Dcd_{\gamma_e^n}(\overset{n}{\prod}_{i=1}\Ecd_{\gamma_e}(x_i))$.
For $x_i>0$ and $\check{x}_i<q$, the encoding process \eqref{ecd_e}
and the decoding process \eqref{dcd_e} yield: $\Ecd_{\gamma_e}(x_i)=\gamma_e^{-1}x_i+\delta_i=\gamma_e^{-1}(x_i+\gamma_e\delta_i)$ and 
\begin{align}
\Dcd_{\gamma_e^n}\left(\prod^n_{i=1}\Ecd_{\gamma_e}(x_i)\right)
=\Dcd_{\gamma_e^n}\left(\prod^n_{i=1}\gamma_e^{-1}(x_i+\gamma_e\delta_i)\right)
=\gamma_e^n\left(\prod^n_{i=1}\gamma_e^{-1}(x_i+\gamma_e\delta_i)\right),
\end{align}
which corresponds to the right-hand side of \eqref{thm:eq}.
Similarly, for $x_i<0$ and $\check{x}_i>q$, the same reasoning holds.
This terminates the proof.
\end{proof}

\textbf{Propositions~\ref{prop}} and \textbf{\ref{theo}} imply that, as $\gamma_e\rightarrow 0$, the impact of encryption-induced quantization errors on the product can be made negligible to achieve $\bar{F}_{\mathrm{E}}^*=F^*$, even though the values of $\delta_i$, $\forall i$, remain unknown.
 This property will be examined numerically in \textbf{Section~\ref{sec:exp}}.
 
\begin{figure}[t!]
  \begin{algorithm}[H]
  \caption{ElGamal-Based Server-Side Confidential FRIT}
  \label{elgamal_server_processing}
  \begin{algorithmic}[1]
  \REQUIRE $\Enc(\check{\Gamma})$, $\Enc(\check{W})$, $\Enc(\check{\Psi})$, $\Enc(\lvert\check{\Psi}\rvert^{-1})$, and $\pk$.
  \ENSURE $\Enc(\check{F}^*_j)$, $\xi_j$, $\forall j\in\{1,2,\cdots,M\}$.
  \STATE Compute $\Sigma$ of~\eqref{sigma}.
  \STATE Compute $\Enc(\check{\Phi}_k)$, $\Omega_k\in\R^{n\times n}$, $k\in\{1,2,\cdots,(n-1)!\}$, using \textbf{Algorithm~\ref{elgamal_phi_omega_generation}}.
  \FOR{$k=1$ to $(n-1)!$}
  \FOR{$i=1$ to $nN$}
  \FOR{$l=1$ to $n$}
  \STATE $j\gets(k-1)nN+(i-1)n+l$
  \STATE $\Enc(\check{F}^*_j)\gets\Enc(\check{-1})*\Enc(\check{\Gamma}_i)*\Enc(\check{W}_{il})*\Enc(\check{\Phi}_{k,l})$
    \STATE $\xi_{j,l}\gets\Omega_{k,l}+\begin{bmatrix}3&3&\cdots&3 \end{bmatrix}$
  \ENDFOR
  \ENDFOR
  \ENDFOR
  \end{algorithmic}
  \end{algorithm}
\end{figure}

\begin{figure}[t!]
  \begin{algorithm}[H]
  \caption{ElGamal-Based Client-Side Confidential FRIT}
  \label{elgamal_client_processing}
  \begin{algorithmic}[1]
  \REQUIRE $\Enc(\check{F}^\ast_j)$, $\xi_j$, $\forall j\in\{1,2,\cdots,M\}$
  \ENSURE $\bar{F}^*_\mathrm{E}$
  \FOR{$j=1$ to $M$}
  \STATE $\check{F}^*_j\gets\Dec(\Enc(\check{F}^*_j))$.
  \FOR{$l=1$ to $n$}
  \STATE $\gamma\gets(\gamma_e)^{\xi_{j,l}}$
  \STATE $\bar{F}^*_{j,l}\gets\Dcd_{\gamma}(\check{F}^*_{j,l})$.
  \ENDFOR
  \ENDFOR
  \STATE $\bar{F}^*_\mathrm{E}\gets\sum_{j=1}^{M}\begin{bmatrix}\bar{F}^*_{j,1}& \bar{F}^*_{j,2} & \cdots & \bar{F}^*_{j,n}\end{bmatrix}$
  \end{algorithmic}
  \end{algorithm}
\end{figure}

\begin{figure}[t!]
  \begin{algorithm}[H]
  \caption{Computation of $\Enc(\check{\Phi}_k)$ and $\Omega_k$ for ElGamal-based Confidential FRIT}
  \label{elgamal_phi_omega_generation}
  \begin{algorithmic}[1]
  \REQUIRE $\Enc(\check{\Psi})$, $\Enc(\lvert\check{\Psi}\rvert^{-1})$, $\Sigma$, and $\pk$
  \ENSURE $\Enc(\check{\Phi}_k)$, $\Omega_k\in\R^{n\times n}$, $\forall k\in\{1,2,\cdots,(n-1)!\}$
  \FOR{$k=1$ to $(n-1)!$}
  \FOR{$i=1$ to $n$}
  \FOR{$j=1$ to $n$}
  \STATE Compute $\Enc(\check{\tilde{\Psi}}_{ij})$ from $\Enc(\check{\Psi})$.
  \IF{$\mathrm{sgn}\sigma_k=-1$}
  \STATE
  $\Enc(\check{\tilde{\Psi}}_{k,ij})\gets\Enc(\check{-1})*{\prod^*}_{l=1}^{n-1}\Enc(\check{\tilde{\Psi}}_{l\sigma_k(l)})$.
  \STATE
  \(
    \Omega_{k,ji}\gets 
    \begin{cases}
      (n-1)+1+2, & \mathrm{if}\ \ (-1)^{i+j}=-1,\\
      (n-1)+1+1, & \mathrm{otherwise.}
    \end{cases}
  \)
  \ELSE
  \STATE
  \(
    \Enc(\check{\tilde{\Psi}}_{k,ij})\gets{\prod^*}_{l=1}^{n-1}\Enc(\check{\tilde{\Psi}}_{l\sigma_k(l)}).
  \)
  \STATE
  \(
    \Omega_{k,ji} \gets 
    \begin{cases}
      (n-1)+2, & \mathrm{if}\ \ (-1)^{i+j}=-1,\\
      (n-1)+1, & \mathrm{otherwise.}
    \end{cases}
  \)
  \ENDIF
  \IF{$(-1)^{i+j}=-1$}
  \STATE
  \(
\Enc(\check{\Phi}_{k,ji})\gets\Enc(\lvert\check{\Psi}\rvert^{-1})*\Enc(\check{-1})*\Enc(\check{\tilde{\Psi}}_{k,ij}).
  \)
  \ELSE
  \STATE
  \(
\Enc(\check{\Phi}_{k,ji})\gets\Enc(\lvert\check{\Psi}\rvert^{-1})*\Enc(\check{\tilde{\Psi}}_{k,ij}).
  \)
  \ENDIF
  \ENDFOR
  \ENDFOR
  \ENDFOR
  \end{algorithmic}
  \end{algorithm}
\end{figure}

\subsection{CKKS-based Confidential FRIT}
We present the CKKS-based confidential FRIT procedure as follows. 
The client prepares public, private, and evaluation keys for the CKKS encryption scheme, along with the encrypted dataset~$\mathcal{D}$, based on an appropriate sensitivity parameter $\gamma_c$:
\begin{align}
\mathcal{D}:=\left\{
\Enc(\check{\Gamma}),
\Enc(\check{W}),
\Enc(\rvert\check{\Psi}\lvert^{-1}),
\pk,\evk\right\},
\end{align}
where $\check{\Psi}=\Ecd_{\gamma_c}(W^\top W)$. 
The server generates the following encrypted dataset $\mathcal{F}$ from $\mathcal{D}$ using \textbf{Algorithm~\ref{ckks_server_processing}}, which handles the argument of the $\Dec$ function in~\eqref{FC1}:
\begin{align}
\mathcal{F}:=\{\Enc(\check{F}^*)\}.
\end{align}
The client then uses the received dataset $\mathcal{F}$ to compute $\bar{F}^*_\mathrm{C}$ based on \textbf{Algorithm~\ref{ckks_client_processing}}, which implements the operation~\eqref{FC2}.
Similarly, \textbf{Algorithm~\ref{ckks_phi_omega_generation}}, invoked within \textbf{Algorithm~\ref{ckks_server_processing}}, performs the determinant calculation required in the proof of \textbf{Theorem~\ref{thm1}}, using the CKKS-based definition of $\mathcal{D}$ and $\mathcal{F}$.
Additionally, analysis of the CKKS-based FRIT is omitted due to the complexity of the CKKS encryption scheme~\cite{Cheon18}, implemented into the Microsoft SEAL library.
Instead, the quantization error will be evaluated numerically in \textbf{Section~\ref{sec:exp}}.

\begin{figure}[t!]
  \begin{algorithm}[H]
  \caption{CKKS-Based Server-Side Confidential FRIT}
  \label{ckks_server_processing}
  \begin{algorithmic}[1]
  \REQUIRE $\Enc(\check{\Gamma})$, $\Enc(\check{W})$, $\Enc(\lvert\check{\Psi}\rvert^{-1})$, $\pk$, and $\evk$
  \ENSURE $\Enc(\check{F}^*)$
  \STATE Compute $\Sigma$ of~\eqref{sigma}.
  \STATE Compute $\Enc(\check{\Phi}_k)$, $\forall k\in\{1,2,\cdots,(n-1)!\}$, using \textbf{Algorithm~\ref{ckks_phi_omega_generation}}.
  \FOR{$k = 1$ to $(n-1)!$}
  \FOR{$i = 1$ to $nN$}
  \FOR{$l = 1$ to $n$}
  \STATE $j=(k-1)nN+(i-1)n+l$
  \STATE
  \(
\Enc(\check{F}^*_j)\gets
\Enc(\check{-1})\otimes\Enc(\check{\Gamma}_i)\otimes\Enc(\check{W}_{il})\otimes\Enc(\check{\Phi}_{k,l})
  \)
  \ENDFOR
  \ENDFOR
  \ENDFOR
  \STATE 
  \(
\Enc(\check{F}^*)\gets\bigoplus_{j=1}^{M}\Enc(\check{F}_j^*)
  \)
  \end{algorithmic}
  \end{algorithm}
\end{figure}

\begin{figure}[t!]
  \begin{algorithm}[H]
  \caption{CKKS-Based Client-Side Confidential FRIT}
  \label{ckks_client_processing}
  \begin{algorithmic}[1]
  \REQUIRE $\Enc(\check{F}^*)$
  \ENSURE $\bar{F}^*_\mathrm{C}$
  \STATE $\check{F}^*\gets\Dec(\Enc(\check{F}^*))$
  \STATE $\bar{F}^*_\mathrm{C}\gets\Dcd_{\gamma_c}(\check{F}^*)$
  \end{algorithmic}
  \end{algorithm}
\end{figure}

\begin{figure}[t!]
  \begin{algorithm}[H]
  \caption{Computation of $\Enc(\check{\Phi}_k)$ for CKKS-based Confidential FRIT}
  \label{ckks_phi_omega_generation}
  \begin{algorithmic}[1]
  \REQUIRE $\Enc(\check{\Psi})$, $\Enc(\lvert\check{\Psi}\rvert^{-1})$, $\Sigma$, $\pk$, and $\evk$
  \ENSURE $\Enc(\check{\Phi}_k)$, $\forall k\in\{1,2,\cdots,(n-1)!\}$.
  \FOR{$k = 1$ to $(n-1)!$}
  \FOR{$i = 1$ to $n$}
  \FOR{$j = 1$ to $n$}
  \STATE Compute $\Enc(\check{\tilde{\Psi}}_{ij})$ from $\Enc(\check{\Psi})$.
  \IF{$\mathrm{sgn}\sigma_k=-1$}
  \STATE
  \(
\Enc(\check{\tilde{\Psi}}_{k,ij})\gets\left(\bigotimes_{l=1}^{n-1}\Enc\left(\check{\tilde{\Psi}}_{l\sigma_k(l)}\right)\right)\otimes\Enc(\check{-1})
  \)
  \ELSE
  \STATE
  \(
    \Enc(\check{\tilde{\Psi}}_{k,ij})
    \gets\bigotimes_{l=1}^{n-1}\Enc\left(\check{\tilde{\Psi}}_{l\sigma_k(l)}\right)
  \)
  \ENDIF
  \IF{$(-1)^{i+j} = -1$}
  \STATE
  \(
    \Enc(\check{\Phi}_{k,ji})
    \gets\Enc(\lvert\check{\Psi}\rvert^{-1})\otimes
    \Enc(\check{-1})\otimes
    \Enc(\check{\tilde{\Psi}}_{k,ij})
  \)
  \ELSE
  \STATE
  \(
\Enc(\check{\Phi}_{k,ji})\gets
\Enc(\lvert\check{\Psi}\rvert^{-1})\otimes\Enc(\check{\tilde{\Psi}}_{k,ij})
  \)
  \ENDIF
  \ENDFOR
  \ENDFOR
  \ENDFOR
  \end{algorithmic}
  \end{algorithm}
\end{figure}

\section{Numerical Examples}\label{sec:exp}
This section demonstrates that the proposed confidential FRIT algorithms, derived from \textbf{Theorem~\ref{thm1}} and \textbf{Proposition~\ref{prop}}, enable gain tuning based on encrypted data through two numerical examples. 

\subsection{Comparison under Unified Security Level}
To ensure a fair comparison between the ElGamal-based and CKKS-based FRIT methods, this study introduces the concept of bit security, which provides a unified measure of security strength applicable across different encryption schemes with appropriately chosen parameters~\cite{barker2020,HomomorphicEncryptionSecurityStandard}.
The parameters of the homomorphic encryption schemes were selected to satisfy the 128-bit security level, meaning that breaking either scheme would require approximately $2^{128}$ computations, which is considered infeasible with current technology.
Specifically, for the ElGamal encryption scheme, the public and private key lengths were determined based on the NIST guidelines\cite{barker2020}, while for the CKKS encryption scheme, the parameter configuration followed the guidelines provided in~\cite{HomomorphicEncryptionSecurityStandard}.

For the ElGamal encryption scheme to achieve 128-bit security, the public and private key lengths were set to 3072 bits and 256 bits, respectively.
In Example 1, the resulting keys were 
$p\approx 3.6\times 10^{924}$, 
$q\approx 1.8\times 10^{924}$, 
$g=2$, and 
$h\approx 2.0\times 10^{924}$, 
with the private key $s$\footnote{$s=\mathtt{0xb0633d52e9a9aa4fcb9563aa0dc8ebe9b7af728c9f2e229332fcbaac4b881b8d}\approx 2.5\times 10^{76}$}.
In Example 2, the resulting keys were 
$p\approx 4.3\times 10^{924}$, 
$q\approx 2.1\times 10^{924}$, 
$g=2$, and 
$h\approx 3.5\times 10^{924}$, 
with the private key $s$\footnote{$s=\mathtt{
0x80af30c2f4d77da1b3d0f28e62e21fff736306e12c6a7d603f24b342de7f0380}\approx 5.8\times 10^{76}$}.  
The sensitivity parameter was set to $\gamma_e=2^{-40}$.
The encryption scheme was implemented using our developed C++ Encrypted Control Library.
For the CKKS encryption scheme, the polynomial degree, the total bit length of the coefficient modulus $\log q$, and the bit distribution of the coefficient modulus $\{q_0,q_1,\dots,q_{20}\}$ were set to 32768, 880, and $\{60,40,\dots,40,60\}$, respectively, where the public, private, and evaluation keys are omitted.
The sensitivity was set to $\gamma_c=2^{-40}$. 
The same encryption parameters were used in both examples. 
The encryption scheme was implemented using the Microsoft SEAL library\footnote{https://github.com/microsoft/SEAL}~\cite{Cheon18}.
All computations were performed on a MacBook Air with an Apple M4 chip and 24 GB of memory, running macOS 15.5. 

\subsection{Example Setups and Tuning Results}
\subsubsection{Example 1}
We consider the same example as one in \cite{Kaneko13}, where the plant and the initial feedback gain are as follows:
\begin{align}
A=\begin{bmatrix} 1 & 1 \\ 0 & -2 \end{bmatrix},\quad
B=\begin{bmatrix} 0 \\ 1 \end{bmatrix},\quad
F_{\mathrm{ini}}=\begin{bmatrix} -0.8 & 2.0\end{bmatrix}.
\end{align}
The initial state $x(0)\in\R^2$ is set to zero.
The sampling period is 10 ms.
The signal $v$ is defined as $v(k)=1$ if $1\leq k\leq 5$ and $v(k)=0$ if $k=0$ or $k>5$.
The datasets $\Gamma$ and $W$ were collected over the steps $k\in\Z_{50}=\{0,1,2,\cdots,49\}$. 
The desired closed-loop characteristics were set to
$H_\mathrm{d}=\begin{bmatrix} H_\mathrm{d1} & H_\mathrm{d2} \end{bmatrix}^\top=
\begin{bmatrix} \frac{1}{z^2-0.5z} & \frac{z-1}{z^2-0.5z}\end{bmatrix}^\top$.
The conventional FRIT method~\cite{soma2004new} returns the state feedback gain, $F^{\ast}=\begin{bmatrix} -0.5000 & 1.5000 \end{bmatrix}$.
Meanwhile, the proposed confidential methods return the following state feedback gains $\bar{F}^*_\mathrm{E}$ and~$\bar{F}^*_{\mathrm{C}}$:
\begin{align}
\bar{F}^*_{\mathrm{E}} &=[-0.5000143158246135311~1.49999487779680240], \nonumber\\
\bar{F}^*_{\mathrm{C}} &= [-0.4999214773402256~1.499762191825162].
\end{align}

\begin{figure}[tb]
\centering
\hspace*{-1ex}
\begin{minipage}[t]{0.243\linewidth}\centering
\includegraphics[width=1.13\linewidth]{./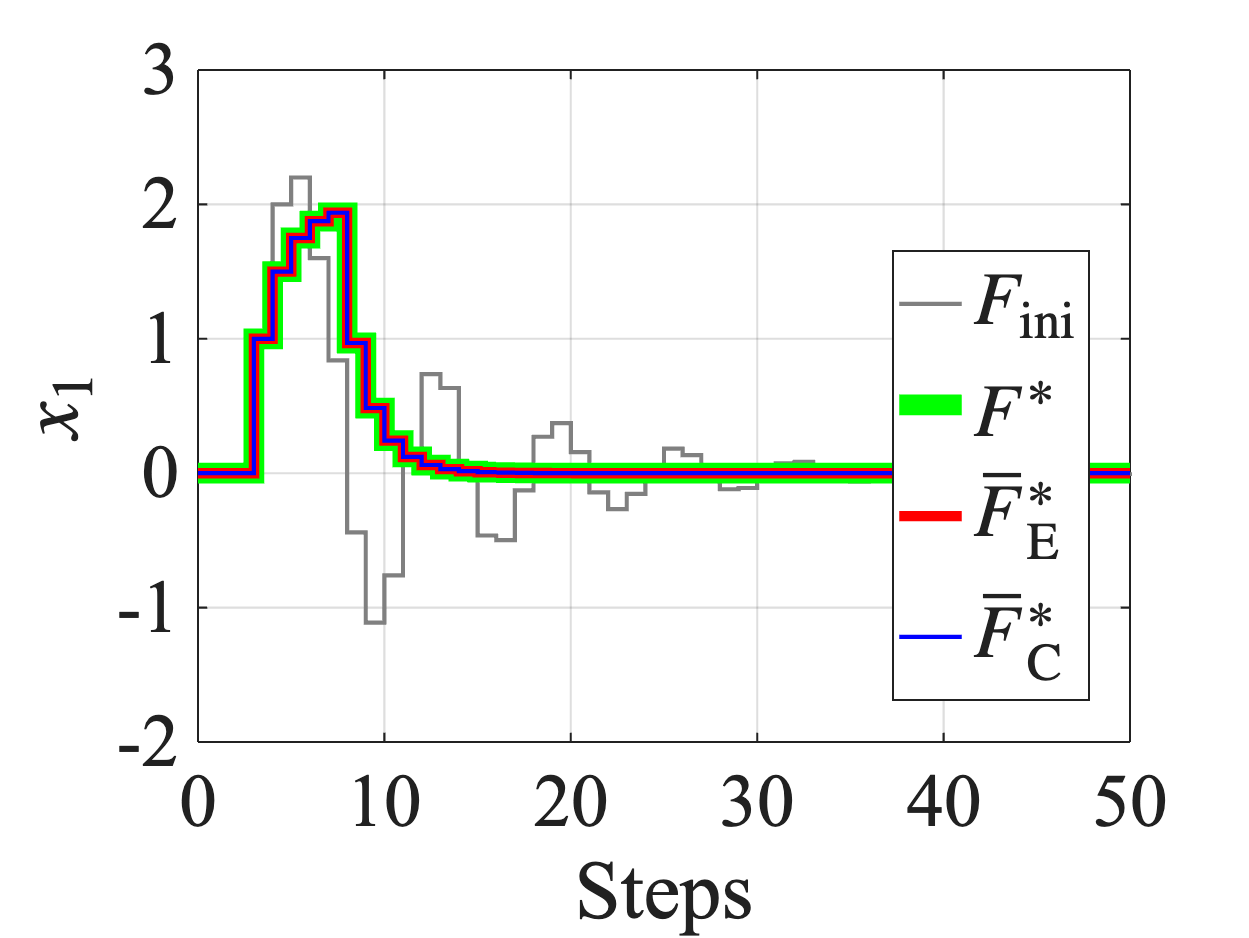}\subcaption{$x_1(k)$}
\end{minipage}
\hspace*{-1ex}
\begin{minipage}[t]{0.243\linewidth}
\includegraphics[width=1.13\linewidth]{./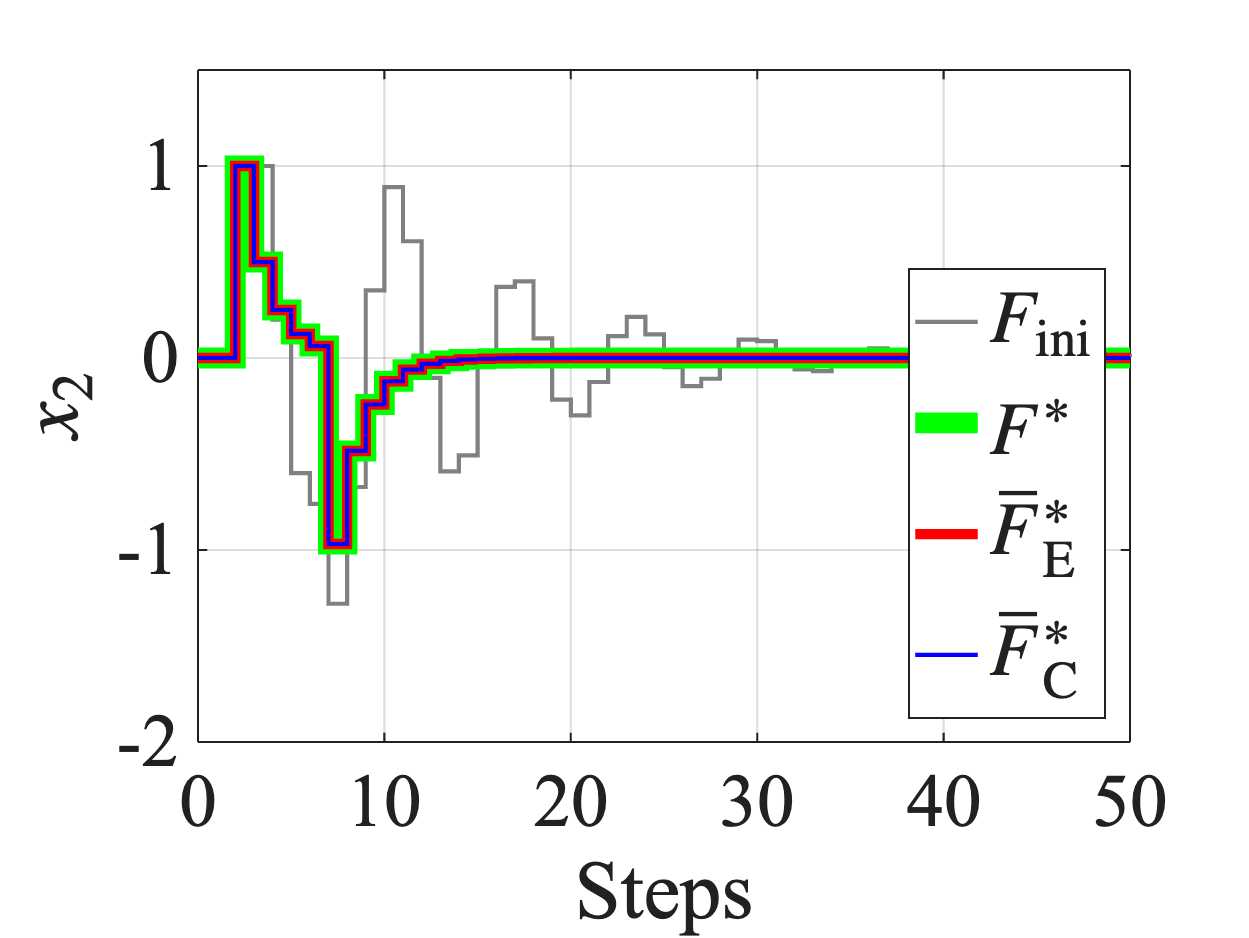}\subcaption{$x_2(k)$}
\hspace*{-1ex}
\end{minipage}
\begin{minipage}[t]{0.243\linewidth}
\includegraphics[width=1.13\linewidth]{./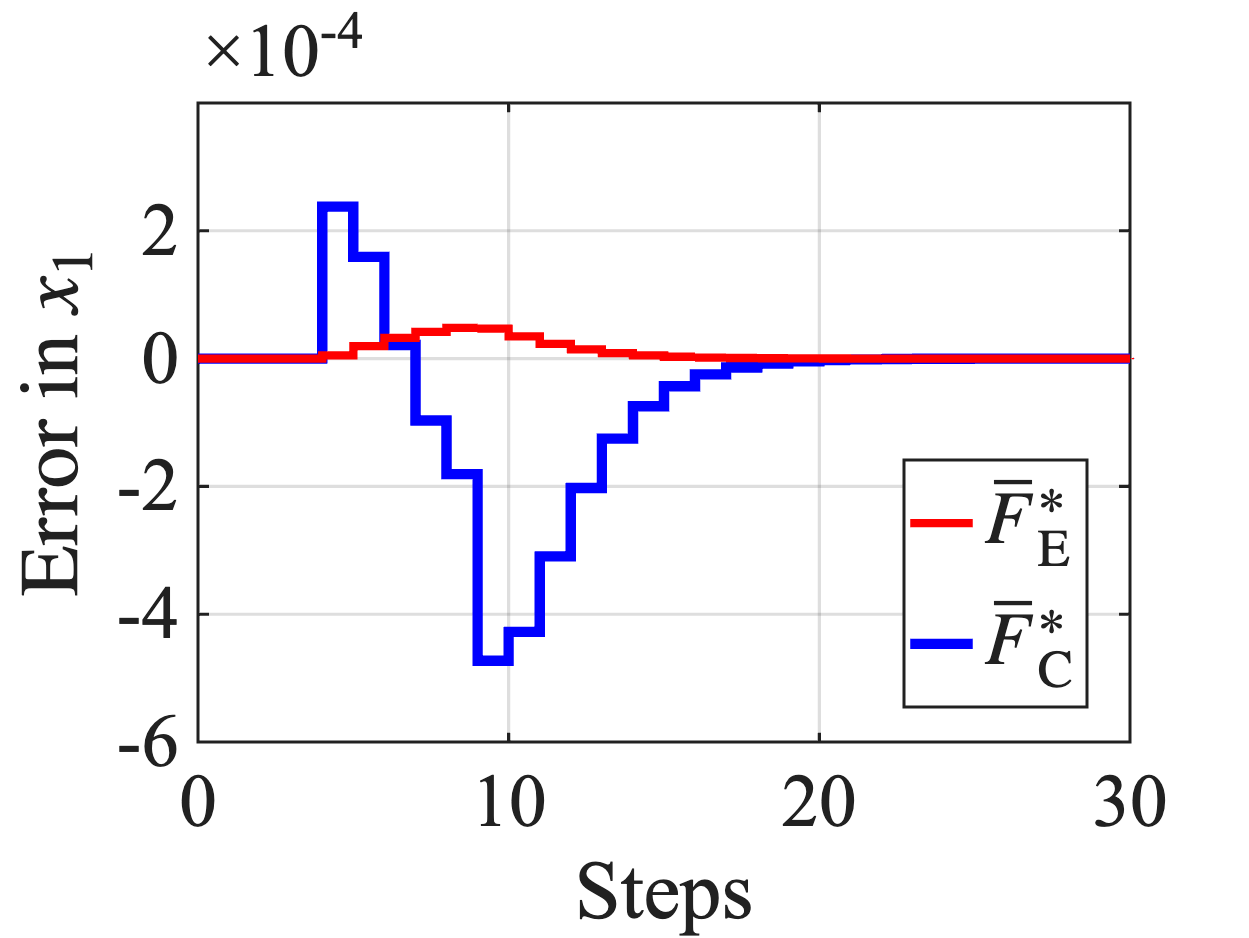}\subcaption{$x_1(k;F^*)-x_1(k;\bar{F}^*_\mathcal{E})$}
\end{minipage}
\hspace*{-1ex}
\begin{minipage}[t]{0.243\linewidth}
\includegraphics[width=1.13\linewidth]{./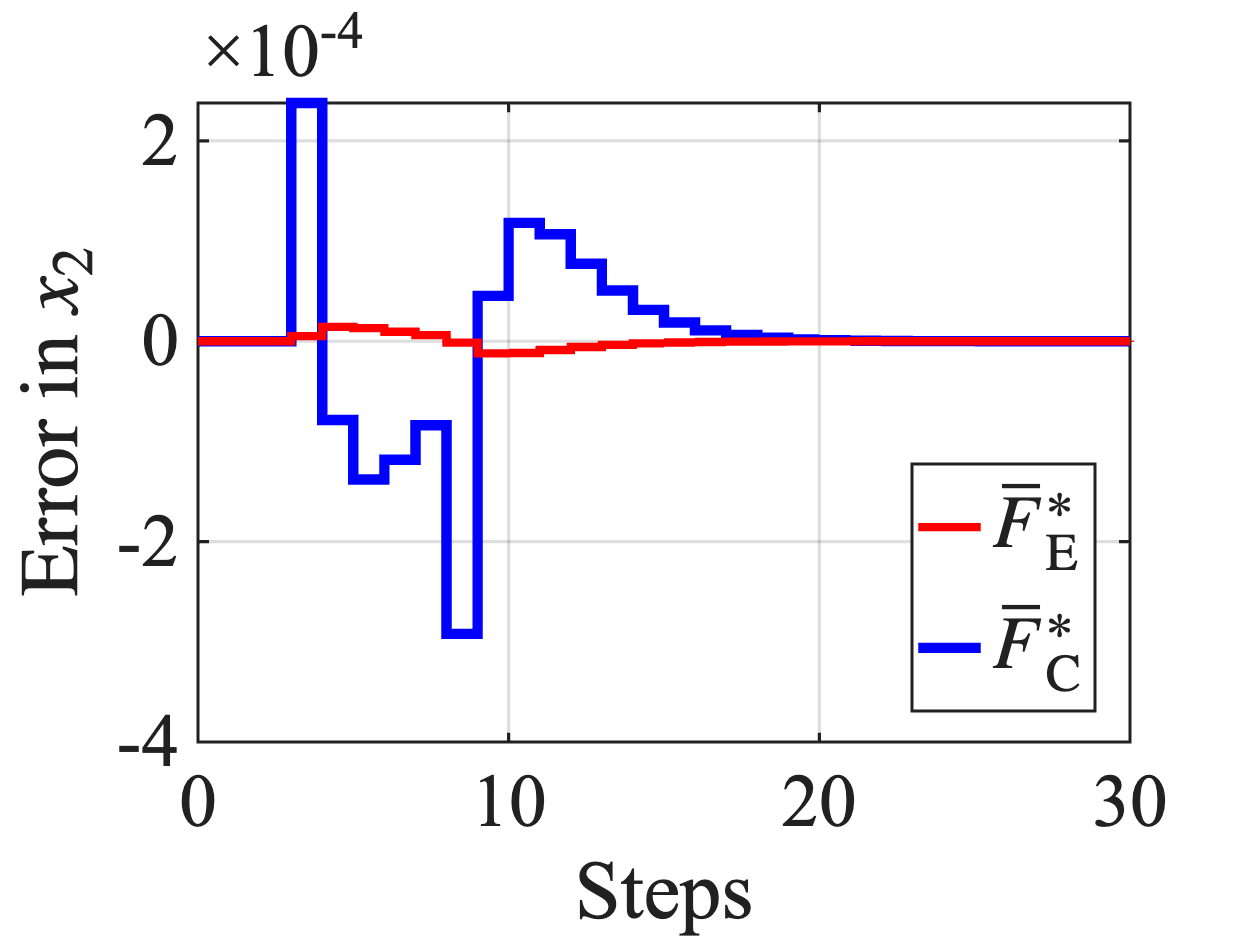}\subcaption{$x_2(k;F^*)-x_2(k;\bar{F}^*_\mathcal{E})$}
\end{minipage}
\caption{Time responses of the state using $F_\mathrm{ini}$, $F^*$, $\bar{F}^*_\mathrm{E}$, and $\bar{F}^*_\mathrm{C}$, and the deviations of the trajectories obtained with $\bar{F}^*_\mathrm{E}$ and $\bar{F}^*_\mathrm{C}$ from that obtained with~$F^*$ in Example~1.}
\label{fig:stateerror}
\end{figure}

\begin{figure}[tb]
\begin{center}
\includegraphics[width=.45\linewidth]{./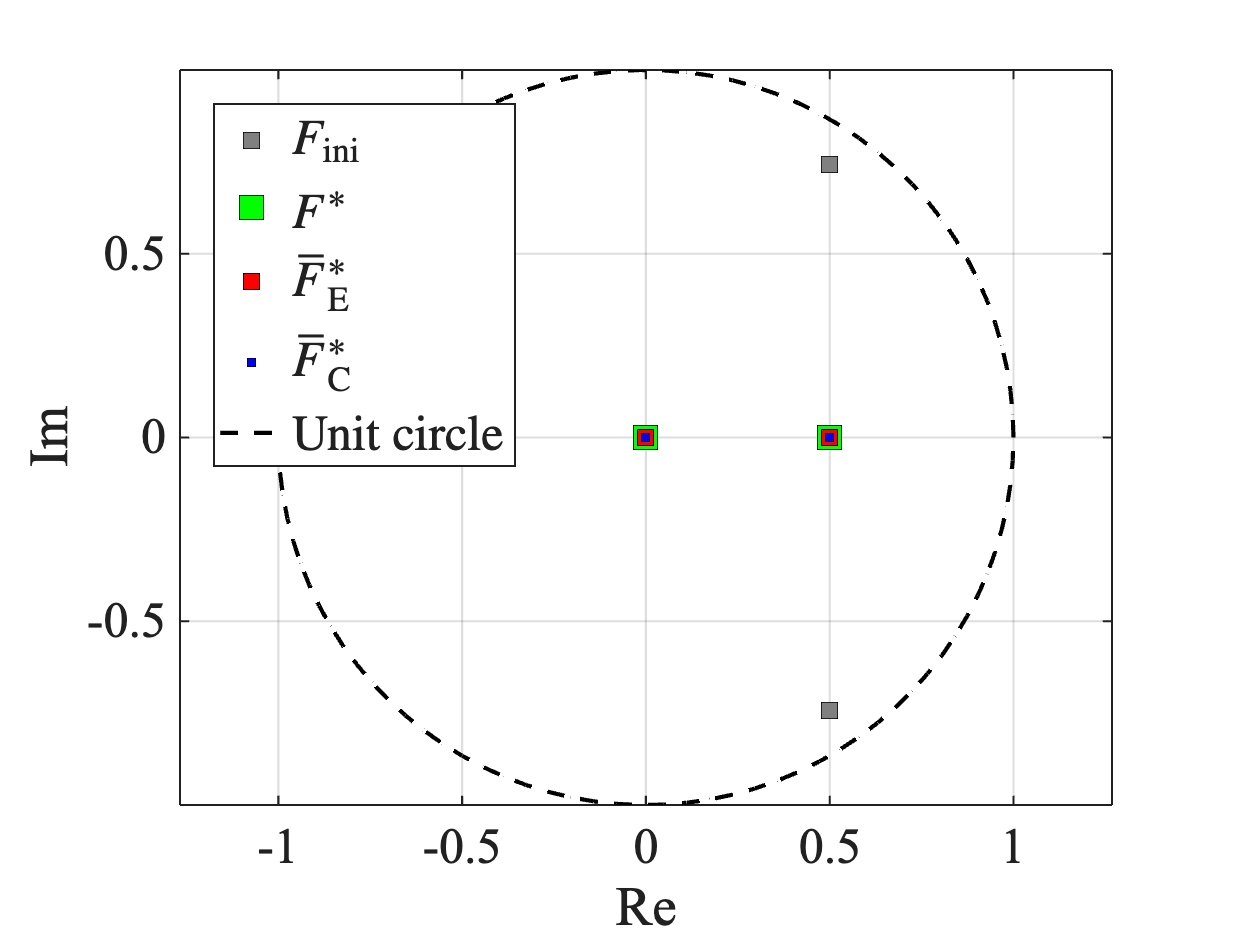}
\caption{Poles of the closed-loop systems using $F_\mathrm{ini}$, $F^*$, $\bar{F}^*_\mathrm{E}$, and $\bar{F}^*_\mathrm{C}$, respectively, in Example~1.
The $\ell_2$-norm of the pole differences between $F^*$ and $\bar{F}^*_\mathrm{E}$, and between $F^*$ and $\bar{F}^*_\mathrm{C}$, are $2.9847\times 10^{-5}$ and $7.4508\times 10^{-4}$, respectively.}
\label{fig:pole}
\end{center}
\end{figure}

The state trajectories are shown in Figs.~\ref{fig:stateerror}(a) and (b), where the feedback gains are set to $F_\mathrm{ini}$, $F^*$, $\bar{F}^*_{\mathrm{E}}$, and $\bar{F}^*_{\mathrm{C}}$, corresponding to the gray, green, red, and blue lines, respectively.
The differences between the state trajectories obtained using the conventional and proposed FRIT methods are shown in Figs.~\ref{fig:stateerror}(c) and (d), illustrating the effects of quantization errors.
The red and blue lines represent the differences using $\bar{F}^*_{\mathrm{E}}$ and $\bar{F}^*_{\mathrm{C}}$, respectively.
The figures show that the ElGamal-based FRIT exhibits smaller deviations from the trajectory obtained using $F^*$ compared to the CKKS-based method.
The magnitude of the differences maintains on the order of $10^{-4}$ over an average of 50 steps, which can be considered negligible.
Furthermore, Fig.~\ref{fig:pole} shows the poles of the closed-loop systems using $F_\mathrm{ini}$, $F^*$, $\bar{F}^*_\mathrm{E}$, and $\bar{F}^*_\mathrm{C}$.
The $\ell_2$-norms of the pole differences between $F^*$ and $\bar{F}^*_\mathrm{E}$, and between $F^*$ and $\bar{F}^*_\mathrm{C}$, are $2.9847\times 10^{-5}$ and $7.4508\times 10^{-4}$, respectively.
These results confirm that the proposed confidential data-driven gain tuning achieves closed-loop poles that are nearly identical to those obtained using the conventional method. 

\subsubsection{Example 2}
We consider the following three-dimensional linear system as a plant:
\begin{align}
A=\begin{bmatrix}     0.9054 & 0.6895 & 0.2246 \\ -0.2246 & 0.2317 & 0.2403 \\  -0.2403 & -0.9455 & -0.2489  \end{bmatrix},\quad
B=\begin{bmatrix} 0.0946 \\ 0.2246 \\ 0.2403 \end{bmatrix}.
\end{align}
The resulting initial state-feedback gain is $F_{\mathrm{ini}}=\begin{bmatrix} 0.12 & -2.37 & -0.82 \end{bmatrix}$.
The initial state $x(0)\in\R^3$ is set to zero.
The sampling period is 1.0 s.
The signal $v$ is defined as $v(k)=1$ if $1\leq k\leq 5$ and $v(k)=0$ if $k=0$ or $k>5$.
The datasets $\Gamma$ and $W$ were collected over the steps $k\in\Z_{30}=\{0,1,2,\cdots,29\}$. 
The desired closed-loop characteristics were set to
$H_\mathrm{d}=\begin{bmatrix} H_\mathrm{d1} & H_\mathrm{d2}  & H_\mathrm{d3}\end{bmatrix}^\top=
\begin{bmatrix} \frac{0.0946z^2 +0.2105z + 0.0342}{z^3 - 0.9803z^2 + 0.4318z - 0.1753} &  \frac{0.2246z^2 -0.1109z - 0.1137}{z^3 - 0.9803z^2 + 0.4318z - 0.1753}
& \frac{0.2403z^2 - 0.5083z + 0.2680}{z^3 - 0.9803z^2 + 0.4318z - 0.1753}\end{bmatrix}^\top$.
Then, the resulting state-feedback gains  $F^{\ast}$, $\bar{F}^*_\mathrm{E}$, and~$\bar{F}^*_{\mathrm{C}}$:
\begin{align*}
F^{\ast} &=\begin{bmatrix}0.1863750 &0.1357217& 0.1833644\end{bmatrix}, \\[.5ex]
\bar{F}^*_{\mathrm{E}} &=\begin{bmatrix}0.1863786 & 0.1357332 & 0.1833673\end{bmatrix}, \\[.5ex]
\bar{F}^*_{\mathrm{C}} &= \begin{bmatrix}0.1863461 & 0.1357010& 0.1833355\end{bmatrix}.
\end{align*}

\begin{figure}[tb]
\centering
\hspace*{-1ex}
\begin{minipage}[t]{0.256\linewidth}\centering
\includegraphics[width=1.15\linewidth]{./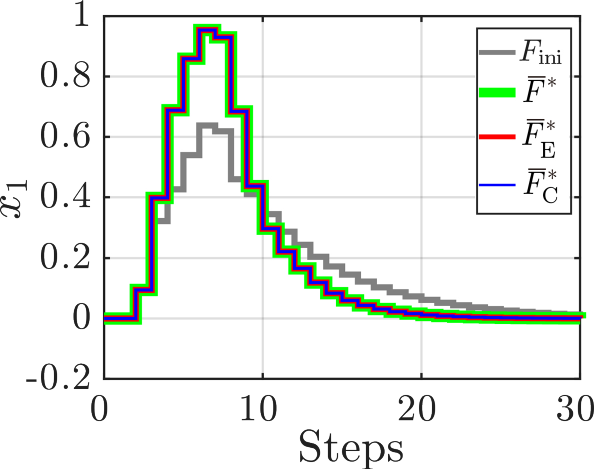}\subcaption{$x_1(k)$}
\end{minipage}
\hspace*{5ex}
\begin{minipage}[t]{0.256\linewidth}
\includegraphics[width=1.15\linewidth]{./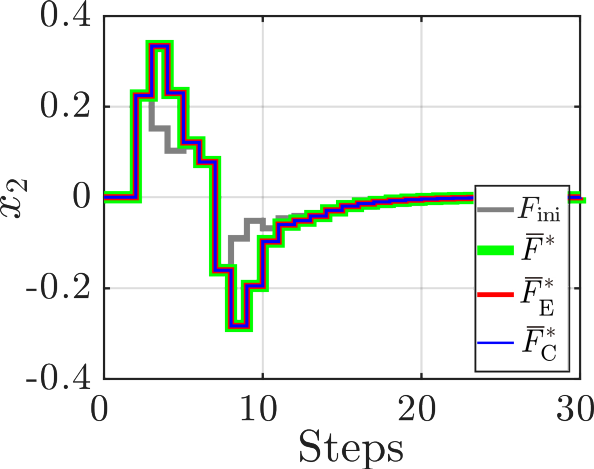}\subcaption{$x_2(k)$}
\end{minipage}
\hspace*{5ex}
\begin{minipage}[t]{0.256\linewidth}
\includegraphics[width=1.15\linewidth]{./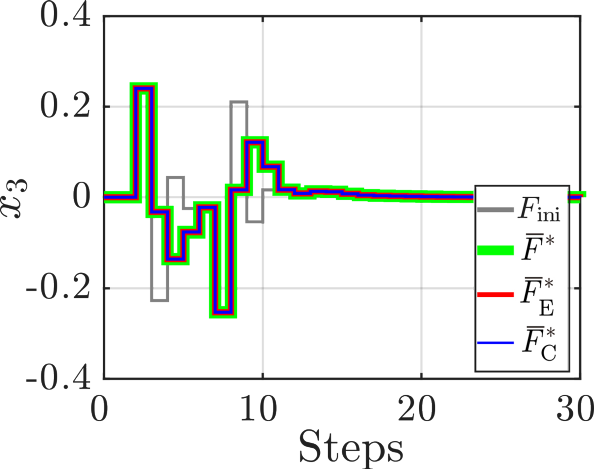}\subcaption{$x_3(k)$}
\end{minipage}\\
\begin{minipage}[t]{0.25\linewidth}
\includegraphics[width=1.15\linewidth]{./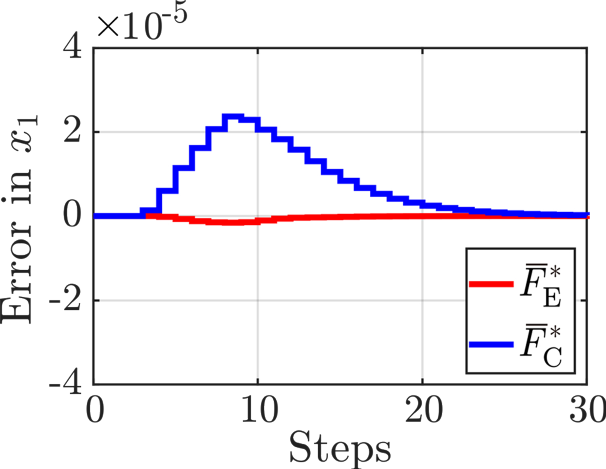}\subcaption{$x_1(k;F^*)-x_1(k;\bar{F}^*_\mathcal{E})$}
\end{minipage}
\hspace*{5ex}
\begin{minipage}[t]{0.25\linewidth}
\includegraphics[width=1.15\linewidth]{./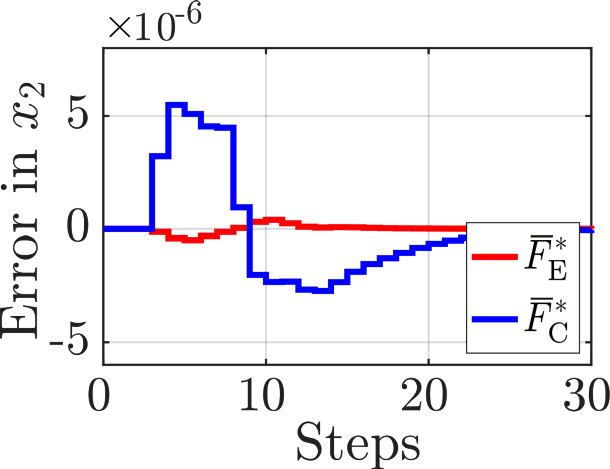}\subcaption{$x_2(k;F^*)-x_2(k;\bar{F}^*_\mathcal{E})$}
\end{minipage}
\hspace*{5ex}
\begin{minipage}[t]{0.25\linewidth}
\includegraphics[width=1.15\linewidth]{./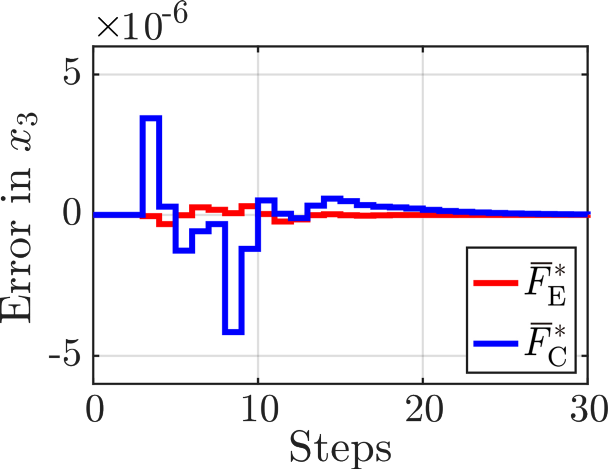}\subcaption{$x_3(k;F^*)-x_3(k;\bar{F}^*_\mathcal{E})$}
\end{minipage}
\caption{Time responses of the state using $F_\mathrm{ini}$, $F^*$, $\bar{F}^*_\mathrm{E}$, and $\bar{F}^*_\mathrm{C}$, and the deviations of the trajectories obtained with $\bar{F}^*_\mathrm{E}$ and $\bar{F}^*_\mathrm{C}$ from that obtained with~$F^*$ in Example~2.}
\label{fig:stateerror2}
\end{figure}

\begin{figure}[tb]
\begin{center}
\includegraphics[width=.45\linewidth]{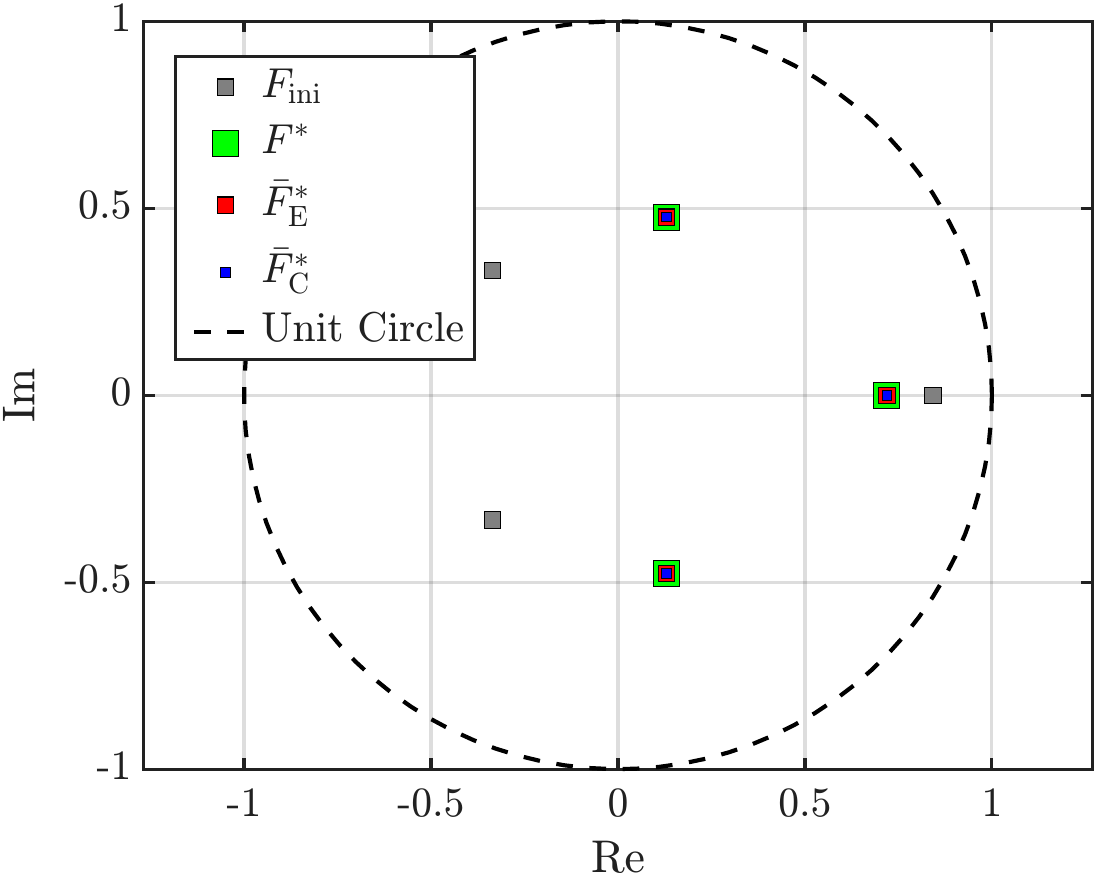}
\caption{
Poles of the closed-loop systems using $F_\mathrm{ini}$, $F^*$, $\bar{F}^*_\mathrm{E}$, and $\bar{F}^*_\mathrm{C}$, respectively, in Example~2. The $\ell_2$-norm of the pole differences between $F^*$ and $\bar{F}^*_\mathrm{E}$, and between $F^*$ and $\bar{F}^*_\mathrm{C}$, are $1.2663\times 10^{-6}$ and  $1.2912 \times 10^{-5}$, respectively.}
\label{fig:pole2}
\end{center}
\end{figure}

The state trajectories are shown in Figs.~\ref{fig:stateerror2}(a)-(c), where the feedback gains are set to $F_\mathrm{ini}$, $F^*$, $\bar{F}^*_{\mathrm{E}}$, and $\bar{F}^*_{\mathrm{C}}$, corresponding to the gray, green, red, and blue lines, respectively.
The differences between the state trajectories obtained using the conventional and proposed FRIT methods are shown in Figs.~\ref{fig:stateerror2}(d)-(f), illustrating the effects of quantization errors.
The red and blue lines represent the differences using $\bar{F}^*_{\mathrm{E}}$ and $\bar{F}^*_{\mathrm{C}}$, respectively.
The figures show that the ElGamal-based FRIT exhibits smaller deviations from the trajectory obtained using $F^*$ compared to the CKKS-based method.
The magnitude of the differences maintains on the order of $10^{-5}$ over an average of 30 steps, which can be considered negligible.
Furthermore, Fig.~\ref{fig:pole2} shows the poles of the closed-loop systems using $F_\mathrm{ini}$, $F^*$, $\bar{F}^*_\mathrm{E}$, and $\bar{F}^*_\mathrm{C}$.
The $\ell_2$-norms of the pole differences between $F^*$ and $\bar{F}^*_\mathrm{E}$, and between $F^*$ and $\bar{F}^*_\mathrm{C}$, are $1.2663\times 10^{-6}$ and $1.2912 \times 10^{-5}$, respectively.
These results confirm that the proposed confidential FRIT achieves closed-loop poles nearly identical to those obtained using the conventional method. 

The computation times at the server side in Fig.~\ref{fig:setup} for the proposed confidential FRITs using the ElGamal and CKKS encryption schemes are summarized in Table~\ref{table:ct}, as evaluated in Example~2.
Each value represents the minimum, median, and maximum time (in milliseconds) measured over ten independent trials.
The table confirms that the ElGamal-based method achieves significantly shorter computation times than the CKKS-based FRIT.
This is primarily due to the lower computational complexity of the homomorphic operations involved in the ElGamal scheme.
These findings indicate that when minimizing computation time is a primary concern, the ElGamal-based FRIT offers a more practical and efficient solution.
Conversely, if resistance to quantum attacks is essential, the CKKS scheme, based on lattice-based cryptography, serves as a more appropriate choice due to its post-quantum security guarantees.

\begin{table}[t!]
  \centering
  \caption{Computation times at the server side for the proposed confidential FRITs in Example~2.}
  \vspace*{-2ex}
  \label{table:ct}
  \begin{tabular}{@{}cccc@{}}
    \midrule
    Scheme & Minimum (ms)& Median (ms) & Maximum (ms) \\ \midrule
    ElGamal-based & $1.032 \times 10^{3}$ & $1.034 \times 10^{3}$ & $1.057 \times 10^{3}$ \\[.5ex]
    CKKS-based    & $2.282 \times 10^{5}$ & $2.338 \times 10^{5}$ & $2.363 \times 10^{5}$ \\ \midrule
  \end{tabular}
\end{table}

\section{Conclusion}\label{sec:con}
This study proposed a confidential FRIT framework and its corresponding algorithms using homomorphic encryption schemes, such as ElGamal or CKKS.
In the numerical examples, with the parameters ensuring 128-bit security, the proposed confidential FRIT algorithms successfully performed gain tuning over encrypted data, resulting in gains approximately identical to those obtained using the conventional FRIT method.
Furthermore, the results confirmed that the closed-loop poles obtained using the ElGamal-based confidential FRIT were closer to the original poles than those obtained using the CKKS-based method. 

In future work, we will extend the proposed algorithms to the tuning of dynamic controllers, integrate confidential FRIT with encrypted control systems to enable secure CaaS, develop detection methods for attacks such as poisoning and deception, and design an efficient matrix inversion technique to reduce data transmission to external servers. 
Additionally, we will explore the broader challenge of securing least squares-based methods, including system identification and optimization-based control.

\appendix
\section{CKKS Encryption Scheme}\label{app:ckks}
This appendix provides a brief introduction to the CKKS algorithms and functions.

\noindent\textbf{Notations}: 
The symbol $\gets$ represents the sampling operator. 
Specifically, $a\gets\mathcal{R}_n$ means that $a$ is a polynomial randomly sampled from $\mathcal{R}_n$. 
For a polynomial $a$, let $\lfloor a\rceil$ be the polynomial obtained by rounding all the coefficients of $a$ to the nearest integer. 
Similarly, for a discrete Gaussian distribution $\chi$ with mean $0$ and variance $\sigma^2$, $b\gets\chi$ means that $b$ represents a polynomial whose coefficients are all sampled from $\chi$. 

\noindent\textbf{Key Generation}: 
Using $\sk$, which is a polynomial over $\mathcal{R}$ with coefficients randomly chosen from $\{-1,0,1\}$, the public and evaluation keys, $\pk=(\pk_0,\pk_1)$ and $\evk=(\evk_0,\evk_1)$, are determined as follows:
$\pk_0=-\pk_1\cdot\sk+e\bmod q_L$ and $\evk_0=-\evk_1\cdot\sk+e'\bmod q_L$,
where $\pk_1\gets\mathcal{R}_{q_L}$, $\evk_1\gets\mathcal{R}_{q_L^2}$, $e\gets\chi$, $e'\gets\chi$, and $q_L=q_0\gamma_c^{-\bar{L}}$,
where $\bar{L}$ is the maximum level.

\noindent\textbf{Encryption}:
Using $\pk$, the encryption function $\Enc(\check{x},\pk)=x^{ct}=(ct_0,ct_1)$ is realized as follows: 
$ct_0=v\cdot\pk_0+e_1+\check{x}\bmod q_L$, $ct_1=v\cdot\pk_1+e_2\bmod q_L$, 
where $v\gets\mathcal{R}_2$, $e_1\gets\chi$ and $e_2\gets\chi$.

\noindent\textbf{Decryption}:
Using $x^{ct}$ and $\sk$, the decryption function $\Dec(x^{ct},\sk)=\check{x}$ is realized, as follows: 
The plaintext $\check{x}$ is calculated based on the divisor $q_l$ updated by the rescaling function:
$\check{x}=ct_0+t_1\cdot\sk\bmod q_l$.
The decryption function produces plaintext with an error term:
$\Dec(x^{ct},\sk)=v\cdot\pk_0+e_1+\check{x}+v\cdot\pk_1+e_2\bmod q_l= \check{x}+v\cdot e+e_1+e_2\cdot\sk\bmod q_l$, where when $\gamma_c$ is sufficiently close to zero, the following approximation holds:
$\check{x}\gamma_c\approx\Dec(\Enc(\check{x},\pk),\sk)\gamma_c$.
This enables high-precision recovery of real-valued data.

\noindent\textbf{Rescaling}:
The rescaling function transforms ciphertext $x^{ct}_l = (ct_{10}, ct_{11}) \in \mathcal{R}^2_{q_l}$ at level $l$ to ciphertext $x^{ct}_{l-1} = (ct_{20}, ct_{21}) \in \mathcal{R}^2_{q_{l-1}}$ at level $l-1$,
where the divisor $q_l$ is updated by multiplying with $\gamma_c$, and for level $l$ $(1\leq l\leq\bar{L})$, $q_l=q_0\gamma_c^l$. 
The rescaling function is defined as: $\Res(x^{ct}_l)=x^{ct}_{l-1}$, where $ct_{20}$ and $ct_{21}$ are calculated as follows:
$ct_{20}=\lfloor ct_{10}\gamma_c\rceil\bmod q_l\gamma_c$ and $ct_{21}=\lfloor ct_{11}\gamma_c\rceil\bmod q_l\gamma_c$.

\noindent\textbf{Homomorphic Operations}:
The addition of ciphertexts $x^{ct}_i$ $(i=1,2)$ is expressed as
$x^{ct}_{\mathrm{add}}=(ct_{30},ct_{31})=x^{ct}_1\oplus x^{ct}_2$, where 
$ct_{30}=ct_{10}+ct_{20}\bmod q_l$ and $ct_{31}=ct_{11}+ct_{21}\bmod q_l$.
As $x^{ct}_1$ and $x^{ct}_2$ are associated with $v_1\gets\mathcal{R}_2$, $v_2\gets\mathcal{R}_2$, and $e_{11}\gets\chi$, $e_{12}\gets\chi$ and $e_{21}\gets\chi$, $e_{22}\gets\chi$, the addition produces:
\begin{align*}
&ct_{30}=(v_1+v_2)\cdot\pk_1+e_{11}+e_{21}+\check{x}_1+\check{x}_2\bmod q_l,\\
&ct_{31}=(v_1+v_2)\cdot\pk_0+e_{12}+e_{22}\bmod q_l.
\end{align*}
This results in a ciphertext containing the sum of the plaintexts.
Furthermore, the multiplication of ciphertexts $x^{ct}_i$ $(i=1,2)$ is expressed as: $x^{ct}_{\textrm{mul}}=x^{ct}_1\otimes x^{ct}_2$,
where
\begin{align*}
x^{ct}_{\mathrm{mul}} &= \Res(x^{ct}_3),\ \ x^{ct}_3=(ct_{30},ct_{31}),\ \ 
ct_{30}=ct_{40}+\lfloor ct_{42}\cdot\evk_0q_L^{-2}\rceil\bmod q_l,\\  
ct_{40}&=ct_{10}\cdot ct_{20}\bmod q_l,\ \ ct_{42}=ct_{11}\cdot ct_{21}\bmod q_l,\\
ct_{31}&=ct_{41}+\lfloor ct_{42}\cdot\evk_1q_L^{-2}\rceil\bmod q_l,\ \ ct_{41}=ct_{10}\cdot ct_{21}+ct_{11}\cdot ct_{20}\bmod q_l.
\end{align*}
The relinearization is performed on $ct_{40}$, $ct_{41}$, and $ct_{42}$ using $\evk$. 
Consequently, the homomorphism regarding addition and multiplication is explained as follows: for arbitrary real numbers $x_i$ $(i=1,2)$, their encoded and encrypted values are $\check{x}_i$ and $x_i^{ct}$, respectively. 
The following calculations hold for $x_i^{ct}$:
\begin{align*}
  \Dec(x_1^{ct}\oplus x_2^{ct},\sk)\approx\check{x}_1+\check{x}_2,\quad  
  \Dec(x_1^{ct}\otimes x_2^{ct},\sk)\approx\check{x}_1\check{x}_2\gamma_c, 
\end{align*}
where $\oplus$ and $\otimes$ represent addition and multiplication operations between encrypted data, respectively.

\section{FRIT for State-Feedback Law}\label{app:frit}
Based on FRIT~\cite{Kan15,Kan16,Kan24}, the details of \eqref{original} are introduced.
Let $x(F)$ and $u(F)$ denote the state and control input associated with the gain $F$, respectively.
The control system designer (client) collects the $N$ signal data points, represented by $\Gamma\in\R^{nN}$ and $W\in\R^{nN\times n}$, where
\begin{align}    
\Gamma&\coloneqq
\begin{bmatrix}\gamma_1 \\ \gamma_2 \\ \vdots \\ \gamma_n \end{bmatrix}\in\mathbb{R}^{nN}, \quad 
  \gamma_j= 
  \begin{bmatrix}
    x_j(k;F_\mathrm{ini})-H_{\mathrm{d}j}u(k;F_\mathrm{ini})\\
    \vdots\\
    x_j(k+N;F_\mathrm{ini})-H_{\mathrm{d}j}u(k+N;F_\mathrm{ini})
  \end{bmatrix}
  \in\mathbb{R}^{N},\\
W&\coloneqq
\begin{bmatrix}w_1 \\ w_2 \\ \vdots \\ w_n\end{bmatrix}\in\mathbb{R}^{nN\times n}, \quad
w_j= 
  \begin{bmatrix}
    H_{\mathrm{d}j}x(k;F_\mathrm{ini})^\top\\
    \vdots\\
    H_{\mathrm{d}j}x(k+N;F_\mathrm{ini})^\top
  \end{bmatrix}
  \in\mathbb{R}^{N\times n},\quad \forall j\in\{1,\cdots,n\}.
\end{align}
The resulting gain $F^*$ obtained from \eqref{original} minimizes the objective function: $J(F)=\|x(F_\mathrm{ini})-H_d \tilde{v}(F)\|_2=||(H(F)-H_d)\tilde{v}(F)||_2$, where the pseudo exogenous signal is given by $\tilde{v}(F)=u(F_\mathrm{ini})-Fx(F_\mathrm{ini})$.
Assuming that the signal $\tilde{v}$ contains certain modes, the minimizer $F^*$ to $J$ is obtained when $H(F)\rightarrow H_d$ is achieved.
Consequently, $J(F^*)=(\Gamma+WF^{*\top})^\top(\Gamma+WF^{*\top})$ holds, and \eqref{original} is derived via the least squares method.
In addition, the above discussion describes the case using closed-loop data; however, FRIT can also be applied to open-loop data. 
For the sake of simplicity, the detailed variable definitions for the open-loop case are omitted.

\section*{Disclosure statement}
No potential competing interest was reported by the author(s).

\section*{Funding}
This work was supported by JSPS KAKENHI Grant Numbers 22H01509 and 23K22779.

\bibliographystyle{tfnlm}
\bibliography{reference}

\end{document}